\documentclass[letterpaper,journal]{IEEEtran}
\usepackage{amsmath,amsfonts}
\usepackage{array}
\usepackage[caption=false,font=normalsize,labelfont=sf,textfont=sf]{subfig}
\usepackage{textcomp}
\usepackage{stfloats}
\usepackage{url}
\usepackage{verbatim}
\usepackage{graphicx}
\usepackage{cite}
\usepackage{amsmath,amssymb,amsfonts}
\def\BibTeX{{\rm B\kern-.05em{\sc i\kern-.025em b}\kern-.08em
    T\kern-.1667em\lower.7ex\hbox{E}\kern-.125emX}}

\usepackage{mathrsfs}
\usepackage{multicol}
\usepackage{wrapfig}
\usepackage[utf8]{inputenc}
\usepackage{amsthm}
\usepackage[shortlabels]{enumitem}
\usepackage{here}
\usepackage{multirow}
\usepackage{stmaryrd}
\usepackage{version}

\usepackage{lscape} 
\usepackage{comment}
\usepackage{listings}
\usepackage{verbatim}
\usepackage{xcolor}
\usepackage{enumitem}
\usepackage{cases}
\usepackage{xspace}
\usepackage{siunitx}
\usepackage{tablefootnote}
\usepackage{booktabs}

\usepackage{bm}
\usepackage{bbm}
\usepackage{algorithm2e}
\RestyleAlgo{ruled}
\SetAlCapNameFnt{\small}
\SetAlCapFnt{\small}
\usepackage{algpseudocode}
\newcommand{\1}{\mbox{\fontencoding{U}\fontfamily{bbold}\selectfont1}}
\newcommand{\0}{\mbox{\fontencoding{U}\fontfamily{bbold}\selectfont0}}

\DeclareMathAlphabet{\mathcal}{OMS}{cmsy}{m}{n}

\newtheorem{theorem}{Theorem}
\newtheorem{corollary}[theorem]{Corollary}
\newtheorem{lemma}[theorem]{Lemma}

\newtheorem{prop}[theorem]{Proposition}
\newtheorem{defn}[theorem]{Definition}

\usepackage{hyperref}%
\hypersetup{colorlinks,allcolors=black}

\renewcommand{\natural}{{\mathbb{N}}}

\newcommand{\real}{\ensuremath{\mathbb{R}}}

\renewcommand{\theenumi}{(\roman{enumi})}

\DeclareSymbolFont{bbold}{U}{bbold}{m}{n}
\DeclareSymbolFontAlphabet{\mathbbold}{bbold}

\newcommand{\E}{\mathbb{E}}

\newcommand{\kemeny}{\mathcal{K}}
\newcommand{\liftK}{\mathcal{K}^\textup{lift}}

\renewcommand{\Pr}{\mathbb{P}}

\newcommand{\Pnl}{P}
\newcommand{\pnl}{p}
\newcommand{\pinl}{\pi}
\newcommand{\Pinl}{\Pi}
\newcommand{\Qnl}{Q}
\newcommand{\qnl}{q}
\newcommand{\xnl}{\mathbf{x}}
\newcommand{\Xnl}{X}
\newcommand{\Wnl}{W}
\newcommand{\wnl}{w}
\newcommand{\nnl}{n}
\newcommand{\Snl}{\mathcal{S}}

\newcommand{\Pl}{P^\textup{lift}}
\newcommand{\pl}{p^\textup{lift}}
\newcommand{\pil}{\pi^\textup{lift}}
\newcommand{\Pil}{\Pi^\textup{lift}}
\newcommand{\Ql}{Q^\textup{lift}}
\newcommand{\ql}{q^\textup{lift}}
\newcommand{\Xl}{X^\textup{lift}}
\newcommand{\xl}{\mathbf{x}^\textup{lift}}
\newcommand{\Wl}{W^\textup{lift}}
\newcommand{\wl}{w^\textup{lift}}
\renewcommand{\nl}{n^\textup{lift}}
\newcommand{\Sl}{\mathcal{S}^\textup{lift}}

\newcommand{\Plp}{P^\textup{lift,prod}}
\newcommand{\plp}{p^\textup{lift,prod}}
\newcommand{\pilp}{\pi^\textup{lift,prod}}
\newcommand{\Slp}{\mathcal{S}^\textup{lift,prod}}

\newcommand{\Qlt}{\tilde{Q}^\textup{lift}}

\usepackage{fancyhdr}
\fancypagestyle{preprint_notice}{
  \fancyhf{}
  
  \fancyhead[C]{\scriptsize This work has been submitted to the IEEE for possible publication. \\ Copyright may be transferred without notice, after which this version may no longer be accessible.}
}

\begin{document}

\title{Deceptive Stochastic Patrolling via Markov Chain Lifting}

\author{Yohan John*, Gilberto D\'iaz-Garc\'ia*, Jason R. Marden, and Francesco Bullo
\thanks{This work was supported by the Air Force Office of Scientific Research under grants FA9550-21-1-0203 and FA9550-25-1-0245.}
\thanks{The authors are with the Center for Control, Dynamical Systems, and Computation, UC Santa Barbara, Santa Barbara, CA 93106-5070, USA. * denotes equal contribution. Email: \{{\tt\small yohanjohn}, {\tt\small gdiaz-garcia}, {\tt\small jrmarden}, {\tt\small bullo}\}{\tt\small @ucsb.edu}}%
}

\maketitle
\thispagestyle{preprint_notice}

\begin{abstract}
In this paper we propose lifted Markov Chains (MCs) as a new paradigm for deriving patrol strategies for mobile agents on an environment represented as a graph. Lifted MCs operate on a state space that can be larger than the set of nodes of the graph, and a projection operation maps the MC state to the corresponding node of the graph. We prove bounds on the performance improvement of lifted MCs over non-lifted MCs, where the performance is measured by the Kemeny constant, and show a corresponding monotonicity for the Stackelberg game capture probability and return-time entropy. These bounds yield a result of independent interest that relates the Kemeny constant and the MC conductance for general MCs. We also provide a tractable method for optimizing lifted MCs on general graphs with edge weights encoding travel times. Simulation results on randomly generated and real-world graphs verify that lifted MCs achieve substantial improvements, particularly on sparse graphs. 
\end{abstract}

\begin{IEEEkeywords}
Markov chains, graph patrolling, stochastic surveillance, Kemeny constant, Stackelberg game, return-time entropy, optimization.
\end{IEEEkeywords}

\section{Introduction}
In this paper we design patrol strategies for mobile agents in a known environment. We focus on stochastic patrol strategies to prevent exploitation by an adversary. There is a substantial body of literature on this topic~\cite{NB:22}. The predominant approach is to model the patrol environment as a graph where nodes represent locations of interest and edges represent paths that an agent can traverse. Markov Chains (MCs) are a natural choice for encoding stochastic patrol strategies on a graph, and they allow our strategies to be optimized offline and to impose minimal memory and compute requirements for online execution. Reference~\cite{XD-FB:20a} contains a review of several prior works that adopt MC-based stochastic patrolling. In these works, there is a bijection between MC states and graph nodes and thus state transition probabilities correspond to edge traversal probabilities. In this paper we propose the application of MC lifting, a classical technique that increases the size of the state space, to stochastic patrolling. Having multiple MC states correspond to the same node of the graph enables lifted MCs to capture more complex patrol behavior. We illustrate the concept and the associated performance improvement with the following example that we will return to throughout. 

\subsection{Running Example: Corridor Environment}

\begin{figure}
\centering
\includegraphics[width=\linewidth]{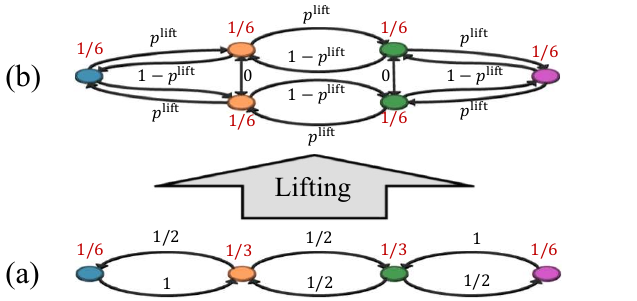}
\caption{An example of lifting for the corridor environment with four nodes. Figure~(a) shows the optimized non-lifted MC $\Pnl$, and Figure~(b) shows a particular class of lifted MCs $\Pl$ parametrized by $\pl\in [0,1]$ where virtual copies have been made of the two central nodes of the graph. The transition probabilities are shown in black, and the stationary probabilities are shown in red. Self-loops have probability zero and are hidden.}
\label{fig:motivating_example}
\end{figure}

We adopt the setting of \cite{RP-PA-FB:14b} in which the performance of an MC patrol strategy is measured via the Kemeny constant (defined in Section~\ref{sec:metrics}) and the MC stationary distribution is constrained to encode priority among the patrol locations. Inspired by~\cite{PD-SH-RMN:00}, we consider a corridor patrol environment modeled as a path graph to demonstrate the advantages of lifting. We set the stationary distribution proportional to the degree of each node. For simplicity, consider a path graph with $\nnl = 4$ nodes and stationary distribution $\pinl = \begin{bmatrix} 1/6 & 1/3 & 1/3 & 1/6 \end{bmatrix}^\top$. Using a constrained nonlinear optimization solver we identify the non-lifted MC $\Pnl$ shown in Fig.~\ref{fig:motivating_example}(a) that minimizes the Kemeny constant subject to the graph and stationary distribution constraints. It achieves a Kemeny constant $\kemeny(\Pnl) = 25/6$. Now consider creating virtual copies of the two central nodes of the graph. This affords more parameters to the lifted MC. Consider lifted MCs $\Pl$ of the form shown in Fig.~\ref{fig:motivating_example}(b). Clearly, $\Pl$ meets the graph constraints. For any value of $\pl\in[0,1]$, $\Pl$ can also be verified to meet the stationary distribution constraint in the sense that summing stationary probabilities of virtual copies yields $\pinl$. Fig.~\ref{fig:p_vs_M} shows the Kemeny constant $\kemeny(\Pl)$ as a function of $\pl$. For $\pl < 0.13$ and $\pl > 0.87$, $\Pl$ achieves a lower Kemeny constant than $\Pnl$. Consider the special case of $\Pl$ of the form shown in Fig.~\ref{fig:motivating_example}(b) where $\pl \in \{0,1\}$. This achieves a Kemeny constant $\kemeny(\Pl) = 7/2$, a meaningful improvement over $\kemeny(\Pnl)$. The improvement grows as the number of nodes $\nnl$ in the path graph increases. For arbitrary path graphs with $\nnl \in \natural_{>1}$ and a desired stationary distribution proportional to the degree of each node, it can be shown that the lifted MC achieves linear scaling while the non-lifted MC scales quadratically. The first two columns of Table~\ref{tab:running_example} summarize these findings. This observation motivates the study of lifted MCs as patrol strategies. 

\begin{figure}
\centering
\includegraphics[width=\linewidth]{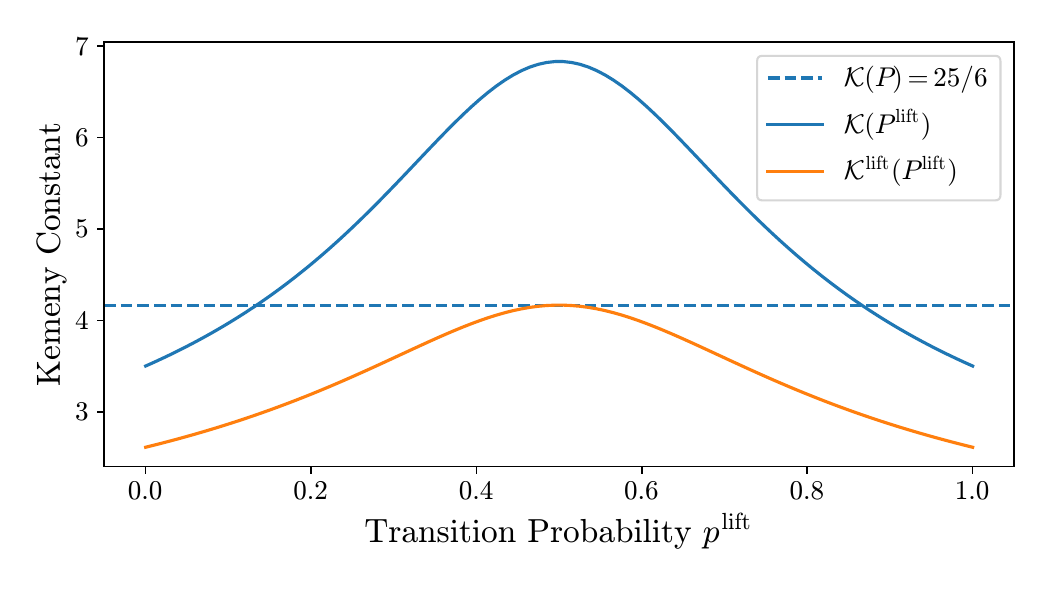}
\caption{A comparison of the performance on the corridor environment of the MC $\Pnl$ shown in Fig.~\ref{fig:motivating_example}(a) and the lifted MC $\Pl$ shown in Fig.~\ref{fig:motivating_example}(b). The dashed blue line shows the Kemeny constant of $\Pnl$, the solid blue shows the Kemeny constant of $\Pl$ as a function of $\pl$, and the solid orange shows the lifted Kemeny constant (defined in Section~\ref{sec:lifted_metrics}) of $\Pl$ as a function of $\pl$. For extreme values of $\pl$ the lifted strategy outperforms the non-lifted strategy by the Kemeny constant metric and it achieves better or equal performance for all values of $\pl$ by the lifted Kemeny constant metric.}
\label{fig:p_vs_M}
\end{figure}

\setlength{\tabcolsep}{4pt}
\begin{table}[ht]
    \centering
    \caption{Summary of Results for Corridor Environment}
    \label{tab:running_example}
    \renewcommand{\arraystretch}{1.75}
    \begin{tabular}{|c|c|c|c|c|}
        \hline
        & $\kemeny(\Pnl)$ & $\kemeny(\Pl)$ & $\liftK(\Pl)$ & $[2\Phi(\Pnl)]^{-1}$ \\
        \hline
        $\nnl = 4$ & $\frac{25}{6}$ & $\frac{7}{2}$ & $\frac{47}{18}$ & $\frac{3}{4}$ \\
        \hline
        $\nnl \in \natural_{>1}$ & $\frac{2\nnl^2-4\nnl+9}{6}$ & $\nnl-\frac{1}{2}$ & $\frac{4\nnl^2-5\nnl+3}{6(\nnl-1)}$ & $\frac{(\nnl-1)^2 - (\nnl \bmod 2)}{4(\nnl-1)}$ \\
        \hline
    \end{tabular}
\end{table}
\setlength{\tabcolsep}{6pt}

\subsection{Related Work}
MC-based stochastic patrolling is an active research area. Reference~\cite{RP-PA-FB:14b} is the foundational paper in which the Kemeny constant was proposed as an appropriate metric for the performance of MC patrol strategies. Recent work includes~\cite{XD-WW-RY:26} in which MC patrol strategies are designed to maximize trajectory entropy subject to a Kemeny constant constraint. Multi-agent MC-based patrolling has been explored in~\cite{BA-DSB:12, GDG-FB-JRM:22v}. However, in all of these works, the authors include a reversibility constraint on the MCs. Reversibility has been shown to limit the performance of MC-based patrol strategies~\cite{CPCF-AZ-BFH:26}. In this paper we avoid imposing reversibility and provide a tractable procedure for optimizing lifted MCs. 

Several non-MC approaches have been proposed to handle multi-agent patrolling of large environments. In~\cite{ABA-SS-SLS:25}, the authors develop an algorithm for identifying the number of agents required to achieve desired nodal refresh times. In~\cite{XL-CGC:15}, the authors formulate multi-agent monitoring of a continuous two-dimensional space as an optimal control problem. The challenging multi-agent stochastic patrolling setting has seen less work. Reference~\cite{DP-RPR:16} is a notable exception in which agents use Bayesian decision rules to minimize refresh time. In this paper, we focus on single-agent stochastic patrol strategies for small graphs with the understanding that large graphs can be partitioned and each subgraph assigned a single agent~\cite{DP-RPR:10}. 

Two recent works also explore MC lifting. In~\cite{SA-AS-FT:21}, the authors analyze the possible speed up in MC mixing time via lifting. They prove that the mixing time is lower bounded by a quantity proportional to the inverse of the conductance. Our lower bound in Theorem~\ref{thm:bounds} is an analogous result for the Kemeny constant. In the context of stochastic patrolling, the authors of~\cite{AN-MTG-MW-BR-MR-TPM:24} propose a technique similar to lifting for a generalization of the Stackelberg game defined in Section~\ref{sec:metrics}. This paper differs from~\cite{AN-MTG-MW-BR-MR-TPM:24} in three main ways: (i) we take a given non-lifted MC and optimize a lifting that preserves the stationary distribution, (ii) we consider the Kemeny constant and return-time entropy in addition to the Stackelberg game capture probability, and (iii) we consider an adversary that is unaware of the lifted space. We show that lifting in this setting can lead to deception of the adversary. 

This aspect relates our work to a growing body of literature on deception and information hiding against observers. In~\cite{PM-SS:17}, an agent performs deceptive path planning by hiding its target from an observer until as late as possible. In~\cite{NT-NA:17}, a multi-agent patrolling team exploits an adversary's partial view of the perimeter to make the defending team appear stronger than it is. In~\cite{MOK-MO-UT:23}, an agent attempts to perform a hidden task while under partial observation by a supervisor. In~\cite{AS-CNH:14}, the authors propose three metrics for opacity which measures the ability for a partial observer to identify that a system has entered a set of secret states. The deception in this paper emerges as a structural consequence of lifting. Because the adversary observes only the graph and is unaware of the true MC state space, they can underestimate the patroller's performance. Therefore, deception incurs no additional computational cost beyond that of optimizing the lifted MC for performance. 

\subsection{Contributions}
This paper contains several contributions to the literature on stochastic patrolling. We propose the use of MC lifting to enable higher performance patrol strategies than can be achieved by the standard approach of assigning a single MC state to each graph node. We define new metrics based on the Kemeny constant, Stackelberg game capture probability, and return-time entropy that appropriately capture the performance of lifted MCs. We prove that optimized lifted MCs cannot perform worse than optimized non-lifted MCs and prove a performance improvement ceiling for lifted Kemeny constant optimization. This leads to a result of independent interest that relates the conductance and Kemeny constant for general, not necessarily reversible, MCs. We also show that MC lifting provides an interesting deceptive aspect since an adversary cannot perceive the state of the MC directly from the location of the patroller. This can lead to a model misspecification error on the part of the adversary. To enable the practical application of lifted MCs, we provide a tractable procedure for optimizing lifted MCs on travel-time-weighted graphs. Finally, we include results from several numerical experiments demonstrating the real-world performance improvement via the proposed procedure.

\section{Markov Chains}
We use a discrete-time MC to represent a stochastic patrol strategy. The patrol environment is encoded via a digraph $\mathcal{G} = (\mathcal{V},\mathcal{E})$ where nodes in $\mathcal{V}$ and edges in $\mathcal{E}$ represent locations to patrol and paths between them, respectively. We construct MCs such that (at least) one state corresponds to each node. We enforce that transition probabilities $p_{ij}$ between states $i,j$ are zero if the graph does not contain an edge between the corresponding nodes. For a state space $\Snl = \{1,\dots,n\}$, the MC can be represented as a row-stochastic transition matrix $P\in \real^{n\times n}$ where each entry $p_{ij}$ represents the probability that a patroller at state $i$ moves to state $j$ next. Let $X_k \in \Snl$ be the state of the MC $P$ at time step $k$. 

We restrict attention to irreducible MCs, i.e., those in which every state is reachable from every other state. It is well-known that an irreducible MC $P$ has a unique stationary distribution $\pi \in \real^{n}$ that satisfies $\pi^\top P = \pi^\top$ with $\pi^\top \1_n = 1$ and $\pi_i > 0$ for all $i$. The stationary distribution captures the limiting behavior of the MC~\cite[Thm. 2.1]{DA-JAF:02}:
\begin{equation}\label{eq:stat_dist}
    \lim_{T\to\infty} \frac{n_i(\mathbf{x}_T)}{T} = \pi_i, \quad \textup{a.s.,}
\end{equation}
where $n_i(\mathbf{x}_T)$ denotes the number of visits to state $i$ in a state trajectory $\mathbf{x}_T = (X_0,\dots,X_{T-1})$. Therefore, of all the patroller's long-term state visits, a fraction $\pi_i$ of them are to state $i$. If the nodes have varying priority, then that can be encoded via the stationary distribution. An alternative representation of an MC is the \emph{ergodic flow matrix} $Q = \Pi P$ where $\Pi = \textup{diag}(\pi)$. $Q$ satisfies $\1_n^\top Q \1_n = 1$ and $Q^\top \1_n = Q \1_n = \pi$. 

We introduce the \emph{first passage time} random variable $T_{i}^j = \min \{k \mid X_0 = i, X_k = j, k \ge 1\}$ that represents the number of time steps between the patroller leaving state $i$ and its first arrival at state $j$. We denote the \emph{mean first passage time} $m_{i}^j = \E[T_{i}^j]$. The \emph{first passage time probability matrices} $F_k$ contain the first passage time probabilities, i.e., $F_k(i,j) = \Pr(T_{i}^j = k)$. The $F_k$ matrices satisfy the recursion \cite[Ch.~5, Eq.~(2.4)]{EC:13}:
\begin{equation}\label{eq:F_k}
    F_{k+1} = P[F_k - \textup{diag}(F_k)],
\end{equation}
where $F_1 = P$. 

\subsection{Metrics for Markov Chain Patrol Strategies}\label{sec:metrics}
In this section we define three important MC patrol strategy metrics from the literature.
\subsubsection{Kemeny Constant}
The \emph{Kemeny constant} is a classical quantity first identified in~\cite{JGK-JLS:76} and proposed as a metric for MC patrol strategies in~\cite{RP-PA-FB:14b}. It is defined for an MC $P$ as 
    \begin{equation}\label{eq:kemeny_matrix}
    \kemeny(P) = \pi^\top M \pi,
\end{equation}
where $M \in \real^{n\times n}$ is the mean first passage time matrix whose entries are $m_{i}^j$. Intuitively, by minimizing the Kemeny constant of an MC, we design patrolling strategies that move quickly around the graph. Each column $m^j$ of $M$ can be computed as
\begin{equation}\label{eq:m_j}
    m^j = [I_n - P(I_n - E_j)]^{-1} \1_n,
\end{equation}
where $E_j = \textup{diag}(e_j)$ and $e_j$ is the $j$-th standard basis vector. 

\subsubsection{Stackelberg Game Capture Probability}
The \emph{Stackelberg Game Capture Probability} (SGCP) metric was proposed in~\cite{XD-DP-FB:19b} and extended in~\cite{YJ-GDG-XD-JRM-FB:23q}. This game-theoretic model identifies the optimal MC strategy for a patroller facing an omniscient adversary. In this model, the adversary has to remain undisturbed at a node $j$ for a given duration $\tau_j \in \natural_{>0}$ in order to complete an attack and win the game. If the patroller visits the attacked node $j$ within $\tau_j$ time periods, then the patroller wins. The metric (to be maximized) is
\begin{equation*}
    \mathcal{J}(P) = \min_{i,j \in \Snl} \left\{ \Pr \left[ T_{i}^j(P) \leq \tau_j \right] \right\}.
\end{equation*}
The minimization reflects the adversary's choice of a node $j$ to attack while the patroller is at node $i$. The probabilities of the patroller visiting node $j$ from node $i$ within $\tau_j$ time periods can be calculated from the first passage time probability matrices $F_k$ and collected into a capture probability matrix 
\begin{equation}
\Psi = \sum_{j=1}^n \sum_{k=1}^{\tau_j} F_k e_j e_j^\top. 
\end{equation}
The patroller's goal is to maximize the worst-case capture probability, i.e., the minimum entry of $\Psi$. 

\subsubsection{Return-Time Entropy}
The \emph{Return-Time Entropy} (RTE) metric was proposed in~\cite{XD-MG-FB:17o}. Consider an adversary who discreetly observes their target node for a period of time prior to attacking. If the patroller is absent for predictable intervals, then the adversary can act accordingly. To prevent this, the RTE metric identifies patrol strategies with unpredictable return times $T_j^j$ to each node $j$ in the graph. Unfortunately, the RTE does not admit a closed form, so we maximize the truncated RTE defined as
\begin{equation} \label{eq:ret_ent}
    \mathcal{H}(P) = -\sum_{j=1}^n \pi_j \sum_{k=1}^{K_\eta} F_k(j,j) \log F_k(j,j),
\end{equation}
where $F_k(j,j) = \Pr[T_j^j = k]$ and $K_{\eta} = \lceil 1 / (\eta \pi_{\textup{min}}) \rceil - 1$. $\eta \in (0,1)$ is a parameter that upper bounds the truncated probability, and $\pi_{\textup{min}}$ denotes the smallest entry of $\pi$.

\section{Lifted Markov Chains}\label{sec:lifting}
Lifting is a classical technique originally used to decrease the mixing time of an MC. Lifting an MC increases the size of its state space by creating virtual copies of the original states while preserving the stationary distribution of the original MC. Note that the literature contains other definitions of MC lifting, but in this paper we use the term to refer specifically to the following construction from~\cite{FC-LL-IP:99}.

To formally define lifting consider the MC $\Pl$ on the enlarged state space $\Sl = \{1,\dots,\nl\}$ and the MC $\Pnl$ on a partition $\Snl = \{S_1,\dots,S_\nnl\}$ of $\Sl$. The partition is specified via the indicator matrix $V \in \{0,1\}^{\nl \times \nnl}$ where the entry $v_{ij} = \bm{1}_{S_j}(i)$ indicates whether or not virtual state $i$ belongs to set $S_j$. Each row of $V$ sums to one and each column sum is at least one. We say that $\Pl$ is a \emph{lifting} of $\Pnl$ with respect to $V$ if the following conditions hold for all $i,j \in \Snl$~\cite{FC-LL-IP:99}:
\begin{equation}\label{eq:coll_elementwise}
\begin{aligned}
    \pinl_j & = \pil(S_j), \\
    \pnl_{ij} & = \frac{1}{\pil(S_i)} \sum_{k \in S_i} \pil_k \sum_{l \in S_j} \pl_{kl},
\end{aligned}
\end{equation}
where $\pil(S) = \sum_{i \in S} \pil_i$ for a set $S$. We call $\Pnl$ the \emph{collapsing} of $\Pl$ with respect to $V$. 
The first condition says that the stationary probability of each graph node is the sum of the stationary probabilities of the corresponding virtual copies. In this sense, lifting preserves the stationary distribution. The second condition computes collapsed transition probabilities by summing transition probabilities \textit{to} virtual copies and averaging (weighted by $\pil$) the transition probabilities \textit{from} virtual copies. Both conditions in Eq.~\eqref{eq:coll_elementwise} can be rewritten in matrix form by introducing $V$:
\begin{equation}\label{eq:coll}
\begin{aligned}
    \pinl & = V^\top \pil, \\
    \Pnl & = U \Pl V,
\end{aligned}
\end{equation}
where $U = \left( V^\top \Pi^\textup{lift} V \right)^{-1} V^\top \Pi^\textup{lift}$. The transition probability condition can also be written as a relationship between the ergodic flows $\Ql = \Pil\Pl, \Qnl = \Pinl\Pnl$:
\begin{equation}\label{eq:coll_Q}
    \Qnl = V^\top \Ql V.
\end{equation}
Notice that right-multiplying the above equation by $\1_\nnl$ yields the stationary distribution relationship in Eq.~\eqref{eq:coll}. Therefore, Eq.~\eqref{eq:coll_Q} captures both conditions in Eq.~\eqref{eq:coll}. 

\subsection{Corridor Environment: Mean Capture Time}
In this section we demonstrate the need for new metrics to appropriately capture the performance of lifted MCs. We return to the corridor environment represented by the four-node path graph and use for demonstration the non-lifted MC $\Pnl$ shown in Fig.~\ref{fig:motivating_example}(a) and the lifted MC $\Pl$ shown in Fig.~\ref{fig:motivating_example}(b) with $\pl = 0.9$. We first show that the Kemeny constant is a meaningful metric for non-lifted MCs. We perform repeated Monte Carlo runs as follows. We randomly select an initial node for the patroller and the adversary according to the stationary distribution $\pinl$. We simulate the movements of the patroller according to $\Pnl$ and count the number of transitions until the patroller arrives at the adversary's location. If both the patroller and adversary start at the same node, then we count the transitions until the patroller returns to that node. Fig.~\ref{fig:lifted_kemeny_constant} shows that the mean capture time computed this way converges to the Kemeny constant $\kemeny(\Pnl)$. Now we repeat the process using the lifted MC $\Pl$. Here we randomly select the initial virtual state of the patroller according to the stationary distribution $\pil$. Fig.~\ref{fig:lifted_kemeny_constant} shows that the mean capture time does not converge to the Kemeny constant $\kemeny(\Pl)$ but instead to another quantity $\liftK(\Pl)$ that we will define subsequently. 

\begin{figure}
\centering
\includegraphics[width=\linewidth]{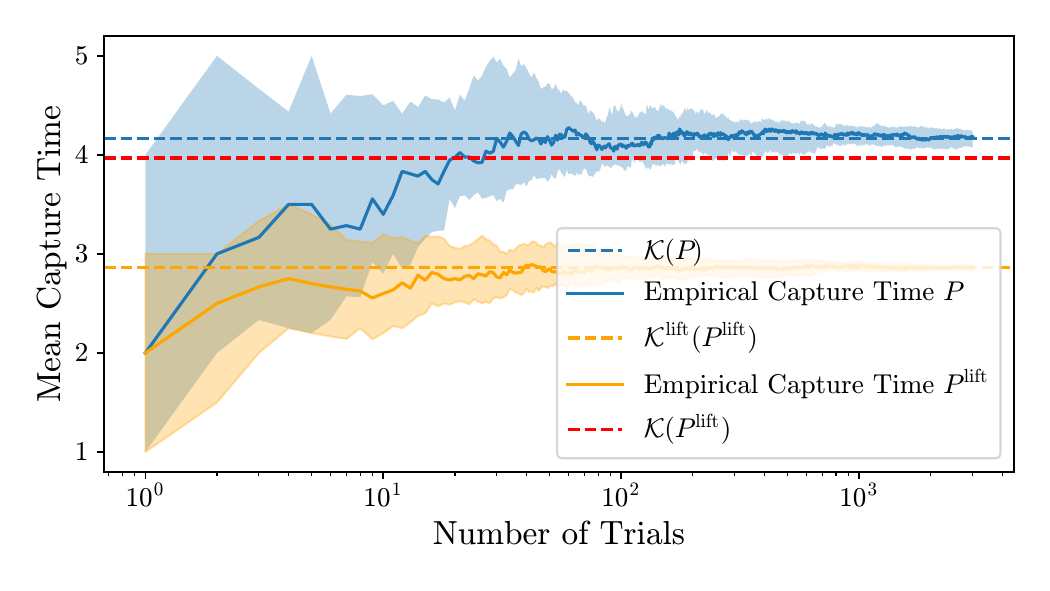}
\caption{Semi-log plot of the empirical mean capture time as a function of number of trials for the MC $\Pnl$ from Fig.~\ref{fig:motivating_example}(a) and the lifted MC $\Pl$ from Fig.~\ref{fig:motivating_example}(b) with $\pl=0.9$. The median and interquartile range are shown from 50 Monte Carlo runs. Also shown are the MC strategy metrics $\kemeny(\Pnl),\liftK(\Pl),\kemeny(\Pl)$. The empirical capture time under $\Pnl$ converges to $\kemeny(\Pnl)$ while the empirical capture time under $\Pl$ converges to $\liftK(\Pl)$ not $\kemeny(\Pl)$.}
\label{fig:lifted_kemeny_constant}
\end{figure}

\subsection{Metrics for Lifted Markov Chain Patrol Strategies}\label{sec:lifted_metrics}
To define appropriate metrics for lifted MCs we will need the set analogs of the first passage time quantities. The \emph{set first passage time} random variable $T_i^{S_j} = \min\{k \mid \Xl_0 = i, \Xl_k \in S_j, k \ge 1\}$ for a given set of states $S_j$, and the \emph{set mean first passage time} $m_{i}^{S_j} = \E[T_{i}^{S_j}]$. We will denote the \emph{set} first passage time probability matrices $F_k^\textup{lift}$ where $F_k^\textup{lift}(i,S_j) = \Pr( T_i^{S_j} = k )$. First-step analysis yields the following recursion for the matrices $F_k^\textup{lift}$ where $\circ$ denotes the Hadamard product:
\begin{equation}\label{eq:F_k_lift}
    F_{k+1}^\textup{lift} = \Pl \left[ F_k^\textup{lift} - \left( F_k^\textup{lift} \circ V \right) \right]
\end{equation}
with $F_1^\textup{lift} = \Pl V$. Now we can define the lifted MC metrics. The definitions will guarantee that the lifted metrics are equivalent to the original metrics for $V = I_\nnl$.

\subsubsection{Lifted Kemeny Constant}
The Kemeny constant, defined in Eq.~\eqref{eq:kemeny_matrix}, measures how quickly a patroller can cover a graph by averaging the mean first passage times $m_i^j$ between all pairs of nodes $i,j$ weighted by the stationary distribution. However, using this metric on lifted MCs results in a measure of the average passage time for a patroller to cover any pair of states in $\Sl$. This is not an appropriate metric since there are multiple states in $\Sl$ that represent the same node in the graph. Instead, the \textit{lifted Kemeny constant} will measure how fast a patroller can reach any of the nodes in $\Snl$ from any virtual state in $\Sl$. We define the lifted Kemeny constant as
\begin{equation}\label{eq:liftK}
    \liftK(\Pl) =  \left( \pil \right)^\top M^\textup{lift} V^\top \pil,
\end{equation}
where $M^\textup{lift} \in \real^{\nl\times\nnl}$ is the \emph{set} mean first passage time matrix with entries $m_i^{S_j}$. Each set $S_j$ contains all the virtual states corresponding to node $j$. Similarly to $M$, we compute $M^\textup{lift}$ column-wise:
\begin{equation}\label{eq:m_Sj}
    m^{S_j} = [I_{\nl} - \Pl(I_{\nl} - D_j)]^{-1} \1_{\nl},
\end{equation}
where $D_j = \textup{diag}(v^j)$ and $v^j$ denotes the $j$-th column of $V$. As shown in Fig.~\ref{fig:lifted_kemeny_constant}, the lifted Kemeny constant accurately computes the mean capture time of a lifted MC. Also note that because $m_i^{S_j} \leq m_i^l$ for all $l\in S_j$, we have that $\liftK(\Pl) \leq \kemeny(\Pl)$. An example of this can be seen in Fig.~\ref{fig:p_vs_M}. Combined with the findings in Fig.~\ref{fig:lifted_kemeny_constant}, we can conclude that the Kemeny constant overestimates the mean capture time of a lifted MC.

\subsubsection{Lifted Stackelberg Game Capture Probability}
Following the same reasoning, we propose the \textit{lifted SGCP} metric based on the set first passage times as follows: 
\begin{equation*}
    \mathcal{J}^\textup{lift}(\Pl) = \min_{i,j \in \Snl} \left\{ \frac{1}{\pinl_i} \sum_{l \in S_i} \pil_l \Pr\left[T_l^{S_j}(\Pl) \leq \tau_j\right] \right\},
\end{equation*}
where we use the $\pil$-weighted average of $T_l^{S_j}$ for each $l\in S_i$ because the adversary cannot distinguish the virtual states corresponding to each node, and the MC $\Pl$ is assumed to have reached stationarity. The corresponding lifted capture probability matrix is given by 
\begin{equation}\label{eq:Psi_lift}
    \Psi^\textup{lift} = U \sum_{j=1}^\nnl \sum_{k=1}^{\tau_j} F_k^\textup{lift} e_j e_j^\top,
\end{equation}
where $e_j \in \real^\nnl$ and $\Psi^\textup{lift} \in \real^{\nnl\times\nnl}$. The patroller aims to maximize the minimum entry of $\Psi^\textup{lift}$. The conceptual difference between this metric and the SGCP is that here we are departing from the setting of an omniscient adversary. Here we assume that the adversary only knows the location of the patroller on the graph, not its virtual state.

\subsubsection{Lifted Return-Time Entropy}\label{sec:lift_RTE}
We propose the \textit{lifted RTE} metric that measures the unpredictability of return times to each node $j$ via the entropy of the set return times $T_l^{S_j}$ for $l\in S_j$. At stationarity and with an adversary that cannot distinguish the virtual copies, the probability of returning to node $j$ at time step $k$ is
\begin{equation}\label{eq:R_k}
    R_k(S_j) =  \frac{1}{\pinl_j} \sum_{l\in S_j} \pil_l \Pr[T_l^{S_j} = k],
\end{equation}
where $\Pr[T_l^{S_j} = k] = F_k^{\mathrm{lift}}(l,S_j)$. Then we can define the truncated lifted RTE as
\begin{equation}\label{eq:H_lift}
    \mathcal H^{\mathrm{lift}}(\Pl)
    =-\sum_{j=1}^\nnl \pinl_j\sum_{k=1}^{K_\eta}R_k(S_j) \log R_k(S_j).
\end{equation}
For a meaningful comparison between the RTE and the lifted RTE, we must ensure that the expected return times are equal. For a non-lifted MC, Kac's lemma~\cite[Lem.~21.12]{DaL-YP:17} directly yields the expected return time $m_j^j = 1/\pinl_j$ for all $j$. For a lifted MC, we have that the expected lifted return time
\[
    \sum_{k = 1}^\infty k R_k(S_j) = \frac{1}{\pinl_j} \sum_{l\in S_j} \pil_l m_l^{S_j} = \frac{1}{\pinl_j},
\]
where the second equality is due to another application of Kac's lemma. Therefore, the expected return times are the same, and it is reasonable to compare their entropies. This also yields an expression for $K_{\eta} = \lceil 1 / (\eta \pinl_{\textup{min}}) \rceil - 1$, guaranteeing that $\eta$ upper bounds the truncated probability by a similar argument as in~\cite{XD-MG-FB:17o}. 

\subsection{Corridor Environment: Lifted Kemeny Constant}
Once again, we return to the corridor environment and use the lifted Kemeny constant to more accurately capture the performance of $\Pl$ from Fig.~\ref{fig:motivating_example}(b). Fig.~\ref{fig:p_vs_M} shows $\liftK(\Pl)$ as a function of $\pl$. It can be seen that $\Pl$ actually performs better than $\Pnl$ from Fig.~\ref{fig:motivating_example}(a) for all values of $\pl \neq 1/2$ and equally well at $\pl = 1/2$. It is desirable that $\liftK(\Pl) = \kemeny(\Pnl)$ for $\pl = 1/2$ because the behavior of the two chains on the graph is indistinguishable. The highest-performing lifted strategies at $\pl \in \{0,1\}$ achieve $\liftK(\Pl) = 47/18$, a dramatic improvement over $\kemeny(\Pnl) = 25/6$. It can also be shown that the lifted Kemeny constant of MCs with the same structure as $\Pl$ and $\pl \in \{0,1\}$ scales linearly with $\nnl$ and with a smaller constant factor than $\kemeny(\Pl)$. These findings are summarized in the third column of Table~\ref{tab:running_example}.

\section{Quantifying Improvement from Lifting}
The previous sections highlighted the advantages of using lifted MCs as stochastic patrol strategies and defined appropriate metrics for measuring their performance. In this section we will quantify the performance improvement possible via lifting. To do so, we will need the concept of MC conductance. The \emph{conductance} $\Phi_A(P)$ of a set of states $A$ for a given MC $P$ is defined as $\Phi_A(P) = Q(A,A^c)/[\pi(A)\pi(A^c)]$ where $A^c$ denotes the complement of $A$ and $Q(A,A^c) = \sum_{i\in A} \sum_{j\in A^c} q_{ij}$~\cite{FC-LL-IP:99}. The conductance of the MC $\Phi(P) = \min_{A \subsetneq \Snl,A \neq \emptyset} \Phi_A(P)$. $\Phi$ is known to possess a monotonicity property with respect to lifting, in particular, $\Phi(\Pl) \leq \Phi(\Pnl)$ for any choice of $V$~\cite{FC-LL-IP:99}. $\Phi$ is a classical flow parameter that can be interpreted as a measure of the ease of movement around the state space. Large values indicate easy travel throughout the state space while small values indicate the presence of a bottleneck. For non-lifted MCs $\Pnl$ that represent stochastic patrol strategies on a graph, $\Phi(\Pnl)$ is a function of both the graph topology and the transition probabilities on the available edges. We will be able to show that the ease of movement under $\Pnl$, as measured by $\Phi(\Pnl)$, constrains the achievable performance of liftings of $\Pnl$.

\subsection{Main Results}
The following theorem establishes bounds on the best lifted Kemeny constant achievable for a lifting of a given MC $\Pnl$. The upper bound provides a guarantee of performance not degrading as a result of lifting. The lower bound is valuable because it enables the estimation of possible performance improvement via lifting. The lower bound is given in terms of the MC conductance $\Phi(\Pnl)$ which measures bottlenecks of $\Pnl$.
\begin{theorem}[Lifted Kemeny Constant Bounds]\label{thm:bounds}
    For any irreducible MC $\Pnl$ and partition $V$, the optimal lifting achieves
    \begin{equation}\label{eq:perf_bounds}
        \frac{1}{2 \Phi(\Pnl)} \leq \min_{\Pl \in \mathcal{L}(\Pnl,V)} \liftK(\Pl) \leq \kemeny(\Pnl),
    \end{equation}
    where $\mathcal{L}(\Pnl,V)$ denotes the set of all irreducible liftings of $\Pnl$ with respect to $V$.
\end{theorem}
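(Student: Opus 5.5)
The upper bound is a feasibility statement, and the lower bound is a cut argument that holds for every lifting.

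\emph{Upper bound.} I would exhibit one element of $\mathcal{L}(\Pnl,V)$ whose lifted Kemeny constant equals $\kemeny(\Pnl)$. The natural candidate is the ``trivial'' lifting $\Pl = V\Pnl W$. Here $W\in\real^{\nnl\times\nl}$ splits the mass entering node $j$ among its copies in fixed proportions $w_{jl}>0$ for $l\in S_j$, with $\sum_{l\in S_j} w_{jl}=1$. Thus every copy of node $i$ behaves identically: it moves to node $j$ with probability $\pnl_{ij}$ and lands on copy $l$ with probability $w_{jl}$.

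I would first check that this is a lifting. The vector $\pil_l = \pinl_j w_{jl}$ is stationary. Then $\Ql = \Pil V\Pnl W$, and $V^\top \Ql V = \Pinl \Pnl W V = \Qnl$, since $WV=I$ and $V^\top\Pil V = \Pinl$.

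Irreducibility follows from $w>0$: a path $i\to j$ in $\Pnl$ lifts to a path from any copy of $i$ to any copy of $j$.

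Next, the projected process $V$-image of $\Xl_k$ is itself a Markov chain with kernel $\Pnl$, started from the node of the initial copy. Hence $m_l^{S_j} = m_i^j$ for every $l\in S_i$, i.e.\ $M^\textup{lift} = VM$. Therefore $\liftK(\Pl) = (\pil)^\top V M V^\top\pil = \pinl^\top M\pinl = \kemeny(\Pnl)$, and the minimum is at most this value.

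\emph{Lower bound.} Fix any $\Pl\in\mathcal{L}(\Pnl,V)$. I would show the general inequality $\kemeny$-type quantity $\ge 1/(2\Phi)$ for set hitting times. Then I would combine it with the monotonicity $\Phi(\Pl)\le\Phi(\Pnl)$ from~\cite{FC-LL-IP:99}, or argue directly with cuts of $\Snl$, which is cleaner.

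Pick a nonempty proper $A\subset\Snl$ attaining $\Phi(\Pnl)$, and let $\hat A=\cup_{j\in A}S_j$ in $\Sl$. By \eqref{eq:coll_Q}, $\Ql(\hat A,\hat A^c)=\Qnl(A,A^c)$ and $\pil(\hat A)=\pinl(A)$, so the lifted cut has the same conductance.

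The key step is a hitting-time estimate for the set $\hat A^c$ starting from stationarity restricted to $\hat A$. For any $j\in A^c$, $m_l^{S_j}\ge m_l^{\hat A^c}$. Moreover, by Kac's lemma applied to $\hat A$, the mean excursion length inside $\hat A$ is $\pil(\hat A)/\Ql(\hat A,\hat A^c)$.

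A standard renewal argument then gives $\sum_{l\in\hat A}\pil_l\, m_l^{\hat A^c}\ge \pil(\hat A)^2/(2\Ql(\hat A,\hat A^c))$. The intuition is that a stationary point lands uniformly inside an excursion, so the residual time is at least about half the size-biased length, and Jensen's inequality bounds this below by half the squared mean over the mean. I would derive it from the stationary-start identity $\E_{\pi|_{\hat A}}[\tau]=\E[L(L+1)]/(2\E L)$ with $L$ the excursion length, then apply Cauchy--Schwarz: $\E L^2\ge(\E L)^2$. This gives $\E_{\pil|_{\hat A}}[T^{\hat A^c}]\ge \E L/2$.

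Summing over $j\in A^c$ with weights $\pinl_j$ gives the contribution $\pinl(A^c)\pinl(A)^2/(2\Qnl(A,A^c))$. The symmetric term with $A$ and $A^c$ swapped (using $\Qnl(A^c,A)=\Qnl(A,A^c)$) gives $\pinl(A)\pinl(A^c)^2/(2\Qnl(A,A^c))$. Adding the two,
\[
\liftK(\Pl)\ge \frac{\pinl(A)\pinl(A^c)}{2\Qnl(A,A^c)}=\frac{1}{2\Phi_A(\Pnl)}=\frac{1}{2\Phi(\Pnl)}.
\]

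The main obstacle is the residual-time inequality. One must handle the discrete-time off-by-one issue (starting at time $0$ inside an excursion, with $T\ge1$), which the identity with $L(L+1)$ absorbs favorably. One must also verify the Kac/renewal bookkeeping for non-reversible chains, where only stationarity and irreducibility are used.
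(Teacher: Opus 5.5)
Your argument is correct, and its skeleton is the paper's. For the upper bound, the paper's footnote to \eqref{eq:opt:lump} builds exactly your lifting with $w_{jl}=1/|S_j|$. Lemma~\ref{lem:lump} then gives $M^\textup{lift}=VM$, which is your observation that the projected chain is Markov with kernel $\Pnl$. For the lower bound, the paper uses the same two cross-cut terms, the same bound $m_l^{S_j}\ge m_l^{S_{A^c}}$, and the same identities $\pil(S_A)=\pinl(A)$ and $\Ql(S_A,S_{A^c})=\Qnl(A,A^c)$.

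The genuine difference is how you prove the key estimate $\sum_{l\in B}\pi_l m_l^{B^c}\ge \pi(B)^2/(2Q)$, with $Q=Q(B,B^c)$ (Lemma~\ref{lem:m_Q_rel}). The paper writes the sum as $\sum_k c_k$ with $c_k=\pi_B^\top P_B^k\mathbf{1}$ and shows this sequence is decreasing and convex. It then compares the sum with the triangle under the secant through $c_0=\pi(B)$ and $c_1=\pi(B)-Q$. You instead use the size-biased sojourn identity plus Cauchy--Schwarz.

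These are two readings of one computation. Let $\mu$ be the entrance distribution into $B$ and $L$ the sojourn length. Then $c_k-c_{k+1}=Q\,\Pr_\mu(L>k)$, so the paper's convexity is just monotonicity of a tail function. The triangle bound uses only this monotonicity, the first increment $Q$, and the total mass $\pi(B)$. Your route uses the second moment of $L$ and yields the slightly sharper bound $\pi(B)^2/(2Q)+\pi(B)/2$.

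Do not, however, justify your identity by a renewal argument, since successive sojourns of a general chain are not i.i.d. The identity is exact from stationarity alone. Writing $P_{B^cB}$ for the block of $P$ from $B^c$ to $B$, stationarity gives $\pi_B^\top(I-P_B)=\pi_{B^c}^\top P_{B^cB}=Q\,\mu^\top$. Hence $\pi(B)=Q\,\E_\mu L$ and $\sum_{l\in B}\pi_l m_l^{B^c}=Q\,\mu^\top(I-P_B)^{-2}\mathbf{1}=Q\,\E_\mu[L(L+1)]/2$. This needs neither renewal theory nor Kac's lemma.

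Also drop the first option you float, namely a bound in terms of $\Phi(\Pl)$ combined with $\Phi(\Pl)\le\Phi(\Pnl)$. The inequality $\liftK(\Pl)\ge 1/(2\Phi(\Pl))$ is false in general. Take two nodes with two copies each, and let the chain alternate between the nodes while switching ``lanes'' with probability $\epsilon$. Then $\liftK(\Pl)=3/2$ but $\Phi(\Pl)\le 2\epsilon$. Only cuts that are unions of the sets $S_j$ are controlled by $\liftK$, which is why the direct argument over cuts of $\Snl$ is the one that works.
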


The proofs of the upper and lower bounds will be presented separately in Sections~\ref{sec:upper_bound} and \ref{sec:lower_bound}. Note that the bounds are functions of the given MC $\Pnl$ so the possible performance improvement is determined by $\Pnl$. Furthermore, this result holds for any non-lifted MC including one that is optimized for a particular graph. We will be able to adapt the upper bound argument from Theorem~\ref{thm:bounds} to prove the following corollary for the remaining metrics. Recall that the goal is to maximize these metrics. The proof is given in Section~\ref{sec:upper_bound}.

\begin{corollary}[Lifted Metric Improvement]\label{cor:lmi}
    The optimal lifting of a given MC $\Pnl$ with respect to partition $V$ achieves 
    \begin{equation}
    \begin{aligned}
        \mathcal{J}(\Pnl) & \leq \max_{\Pl \in \mathcal{L}(\Pnl,V)} \mathcal{J}^\textup{lift}(\Pl), \\
        \mathcal{H}(\Pnl) & \leq \max_{\Pl \in \mathcal{L}(\Pnl,V)} \mathcal{H}^\textup{lift}(\Pl),
    \end{aligned}
    \end{equation} 
    where $\mathcal{L}(\Pnl,V)$ denotes the set of all irreducible liftings of $\Pnl$ with respect to $V$.
\end{corollary}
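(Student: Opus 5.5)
The plan is to exhibit, for each metric, one explicit irreducible lifting $\Pl \in \mathcal{L}(\Pnl,V)$ whose lifted metric equals the non-lifted metric of $\Pnl$. The maximum over $\mathcal{L}(\Pnl,V)$ is then at least that value. This is the same device I expect for the upper bound in Theorem~\ref{thm:bounds}. I will use the \emph{trivial lifting}: every virtual copy in $S_j$ behaves exactly like node $j$. Concretely, for $k\in S_i$ and $l\in S_j$, set
\begin{equation*}
\pl_{kl} = \pnl_{ij}\,\frac{\pinl_j\, w_l}{\pinl_j} = \pnl_{ij}\, w_l, \qquad \pil_l = \pinl_j\, w_l,
\end{equation*}
where $w_l>0$ are fixed weights with $\sum_{l\in S_j} w_l = 1$, for instance $w_l = 1/|S_j|$. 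In matrix form, $\Pl = V\Pnl W$ with $W = \Pinl^{-1}V^\top\Pil$, so that $WV = I_\nnl$.

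The first step is to check that this is an irreducible lifting. Row-stochasticity follows from $W\1_{\nl} = \1_\nnl$. Stationarity follows because $(\pil)^\top V = (\pinl)^\top$, which gives $(\pil)^\top V\Pnl W = (\pinl)^\top W = (\pil)^\top$. The collapsing condition Eq.~\eqref{eq:coll_Q} holds since $V^\top \Pil V\Pnl W V = \Pinl\Pnl$. For irreducibility: if $i\to j$ has positive probability in $\Pnl$, then every copy of $i$ reaches every copy of $j$ in one step, so irreducibility of $\Pnl$ transfers to $\Pl$. The graph constraint is respected because edges appear only between copies of adjacent nodes.

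The key step is that the projected process $Y_k$, defined by $\Xl_k\in S_{Y_k}$, is itself a Markov chain with transition matrix $\Pnl$, and that its law does not depend on which copy it starts in. This holds because $\Pl V = V\Pnl$: the probability of moving from any $k\in S_i$ into $S_j$ equals $\pnl_{ij}$, independent of $k$ and of the past. Hence $T_l^{S_j}$ for $l\in S_i$ has the same distribution as $T_i^j$ under $\Pnl$. By induction on Eq.~\eqref{eq:F_k_lift}, we get $F^\textup{lift}_k = VF_k$. The induction uses the identities $(VF_k)\circ V = V\,\textup{diag}(F_k)$ and $\Pl V = V\Pnl$, and compares with Eq.~\eqref{eq:F_k}.

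Substituting into the definitions finishes the argument. From Eq.~\eqref{eq:Psi_lift}, $\Psi^\textup{lift} = UV\Psi = \Psi$, using $UV=I$, so $\mathcal{J}^\textup{lift}(\Pl)=\mathcal{J}(\Pnl)$. From Eq.~\eqref{eq:R_k}, $R_k(S_j) = \frac{1}{\pinl_j}\sum_{l\in S_j}\pil_l F_k(j,j) = F_k(j,j)$. The truncation $K_\eta$ is the same for both metrics, so $\mathcal{H}^\textup{lift}(\Pl)=\mathcal{H}(\Pnl)$. Both inequalities then follow.

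The main obstacle is the lumpability identity $F_k^\textup{lift}=VF_k$, in particular handling the Hadamard term correctly. Every other step is a direct substitution.
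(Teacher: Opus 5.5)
Your proposal is correct and follows the paper's proof. The paper also relies on a lumpable lifting: its footnote constructs one by copying rows of $\Pnl$ and splitting columns among virtual copies, which is your $\Pl = V\Pnl W$ with $w_l = 1/|S_j|$. It then applies Lemma~\ref{lem:lump} to get $F_k^\textup{lift} = VF_k$, and concludes with $UV = I_\nnl$ (so $\Psi^\textup{lift} = \Psi$) and $R_k(S_j) = F_k(j,j)$. Your explicit checks of stationarity, collapsing, irreducibility and the matching truncation $K_\eta$ spell out what the paper leaves to that footnote and to its feasible-region comparison of \eqref{eq:opt:P}--\eqref{eq:opt:barP}.
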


\subsection{Corridor Environment: Conductance Lower Bound}
Returning to our running example of the corridor environment, we can compute the conductance bound. For the non-lifted MC $\Pnl$ shown in Fig.~\ref{fig:motivating_example}(a), the conductance bound $[2\Phi(\Pnl)]^{-1} = 3/4$. The gap between the Kemeny constant $\kemeny(\Pnl) = 25/6$ and the conductance bound indicates the opportunity for performance improvement via lifting. For general path graphs with corresponding MCs $\Pnl$ of the form of Fig.~\ref{fig:motivating_example}(a), it can be shown that the conductance bound scales linearly with $\nnl$. Therefore, we know that linear scaling is the best possible via lifting. $\Pl$ shown in Fig.~\ref{fig:motivating_example}(b) with $\pl \in \{0,1\}$ achieves linear scaling. The last two columns of Table~\ref{tab:running_example} summarize these findings.

\subsection{Upper Bound in Theorem~\ref{thm:bounds}}\label{sec:upper_bound}
In this section we will prove the upper bound presented in Theorem~\ref{thm:bounds} using the concept of lumpability defined in Appendix~\ref{app:lumpability}. We begin with the following lemma whose proof is given in Appendix~\ref{app:lem_3}.
\begin{lemma}\label{lem:lump}
    If an irreducible MC $\Pl$ is lumpable with respect to $V$, then 
    \begin{align}
        F_k^{\mathrm{lift}} &= V F_k, \quad \forall k \in \mathbb{N}_{>0}, \label{eq:VF} \\
        M^{\mathrm{lift}} &= V M, \label{eq:VM}
    \end{align}
    where $F_k$ and $M$ are the first passage time probability matrices and mean first passage time matrix of the collapsed MC $\Pnl$, respectively.
\end{lemma}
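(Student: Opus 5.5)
We use lumpability in its standard (strong) form from Appendix~\ref{app:lumpability}: $\Pl$ is lumpable with respect to $V$ if, for every $j$, the probability $\sum_{l\in S_j}\pl_{kl}$ of jumping into $S_j$ is the same for all states $k$ in a given block $S_i$. In matrix form, $\Pl V = V \hat P$ for some $\nnl\times\nnl$ matrix $\hat P$.

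\textbf{Step 1: identify $\hat P$ with the collapsing.} Left-multiplying $\Pl V = V\hat P$ by $U$ and using $UV = (V^\top\Pil V)^{-1}V^\top\Pil V = I_\nnl$ gives $U\Pl V = \hat P$. By Eq.~\eqref{eq:coll}, $U\Pl V = \Pnl$, so
\begin{equation*}
    \Pl V = V\Pnl .
\end{equation*}

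\textbf{Step 2: prove \eqref{eq:VF} by induction on $k$.} The base case is $F_1^{\mathrm{lift}} = \Pl V = V\Pnl = V F_1$. For the inductive step, assume $F_k^{\mathrm{lift}} = VF_k$. Each row of $V$ has exactly one nonzero entry, so for any matrix $A$,
\begin{equation*}
    (VA)\circ V = V\,\mathrm{diag}(A).
\end{equation*}
To check this, note that the $(l,j)$ entry of the left side is $A(s(l),j)\,\bm{1}[j=s(l)]$, where $s(l)$ denotes the block containing $l$; this is exactly the $(l,j)$ entry of $V\,\mathrm{diag}(A)$. Substituting into the recursion \eqref{eq:F_k_lift},
\begin{equation*}
    F_{k+1}^{\mathrm{lift}} = \Pl\bigl[VF_k - V\,\mathrm{diag}(F_k)\bigr] = \Pl V\bigl[F_k-\mathrm{diag}(F_k)\bigr] = V\Pnl\bigl[F_k-\mathrm{diag}(F_k)\bigr] = VF_{k+1},
\end{equation*}
where the last equality is \eqref{eq:F_k}.

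\textbf{Step 3: deduce \eqref{eq:VM}.} Irreducibility of $\Pl$ implies irreducibility of $\Pnl$, so all the mean first passage times are finite. We can then write $M^{\mathrm{lift}} = \sum_{k\ge1} k F_k^{\mathrm{lift}}$ and $M = \sum_k kF_k$. Multiplying \eqref{eq:VF} by $k$ and summing term by term gives $M^{\mathrm{lift}} = VM$. Alternatively, one can verify directly that $VM$ satisfies the linear system behind \eqref{eq:m_Sj}, namely $[I - \Pl(I - D_j)]\,Vm^j = \1$, using $D_jV = V E_j$ and $\Pl V = V\Pnl$; uniqueness of the solution of that system then gives the result.

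\textbf{Main obstacle.} The only delicate points are the Hadamard identity in Step 2 and pinning down the exact lumpability definition. Step 1 is the crux: it shows that the lumped chain coincides with the $\pil$-weighted collapsing $\Pnl$, so that the recursions for $F_k^{\mathrm{lift}}$ and $F_k$ are driven by the same matrix.
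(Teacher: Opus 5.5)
Your proposal is correct and follows essentially the same route as the paper: obtain $\Pl V = V\Pnl$ from lumpability, prove \eqref{eq:VF} by induction on the recursion \eqref{eq:F_k_lift} using $(VA)\circ V = V\,\mathrm{diag}(A)$, and sum $kF_k$ term by term to get $M^{\mathrm{lift}} = VM$. You also verify the Hadamard identity entrywise (the paper only asserts it) and give a valid alternative check of \eqref{eq:VM} through the linear system, using $(I-D_j)V = V(I-E_j)$; the one step you assert without justification, irreducibility of $\Pnl$, follows as in the paper because every positive lifted transition $\pl_{kl}$ makes the corresponding collapsed transition positive.
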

An important consequence of Lemma~\ref{lem:lump} is the relationship between $\liftK(\Pl)$ and $\kemeny(\Pnl)$ for lumpable MCs $\Pl$: 
\begin{align}\label{eq:2kemeny}
    \liftK(\Pl) & = \left( \pil \right)^\top M^\textup{lift} V^\top \pil = \left( \pil \right)^\top V M V^\top \pil \notag \\
    & = \pinl^\top M \pinl = \kemeny( \Pnl ).
\end{align}
This ensures that $\liftK(\Pl) = \kemeny(\Pnl)$ if $\Pl$ is lumpable with respect to $V$. This is a desirable property because $\Pnl$ exactly reproduces the behavior of $\Pl$. We can use this property to prove the upper bound. 
\begin{proof}[Proof of Upper Bound in Theorem~\ref{thm:bounds}]
    Let us define the optimization problem in Eq.~\eqref{eq:perf_bounds} that finds the optimal lifting of a given non-lifted MC $\Pnl$ with partition $V$,
    \begin{equation} \label{eq:opt:P} \tag{P1}
    \begin{aligned}
        \mathcal{F}_1(\Pnl) = \min_{\Pl,\pil} \quad & \liftK(\Pl) \\
        \text{s.t.} \quad & \Pl \text{ is an irreducible MC}, \\
        & \left( \pil \right)^\top \Pl = \left( \pil \right)^\top, \\ 
        & \pinl = V^\top \pil, \\
        & \Pnl = U \Pl V,
    \end{aligned}
    \end{equation}
    and a second version that enforces lumpability\footnote{Note that \eqref{eq:opt:lump} is feasible because we can construct a feasible $\Pl$ by copying rows of $\Pnl$ corresponding to virtual copies and dividing columns evenly among virtual copies.}, 
    \begin{equation} \label{eq:opt:lump} \tag{P2}
    \begin{aligned}
        \mathcal{F}_2(\Pnl) = \min_{\Pl,\pil} \quad & \liftK(\Pl) \\
        \text{s.t.} \quad & \Pl \text{ is an irreducible MC}, \\
        & \left( \pil \right)^\top \Pl = \left( \pil \right)^\top, \\ 
        & \pinl = V^\top \pil, \\
        & V \Pnl = \Pl V, \\ 
        & \Pnl = U \Pl V.
    \end{aligned}
    \end{equation}
    We also define a non-lifted MC optimization problem whose optimal value is trivially $\kemeny(\Pnl)$,
    \begin{equation} \label{eq:opt:barP} \tag{P3}
    \begin{aligned}
        \mathcal{F}_3(\Pnl) = \min_{\tilde{P},\tilde{\pi}} \quad & \kemeny(\tilde{P}) \\
        \text{s.t.} \quad & \tilde{P} \text{ is an irreducible MC}, \\ 
        & \tilde{\pi}^\top \tilde{P} = \tilde{\pi}^\top, \\ 
        & \pinl = \tilde{\pi}, \\
        & \Pnl = \tilde{P}.
    \end{aligned}
    \end{equation}

    Clearly, the only difference between problems~\eqref{eq:opt:P} and \eqref{eq:opt:lump} is the set of linear constraints $V \Pnl = \Pl V$ that  enforce lumpability of $\Pl$ for the given partition $V$. Therefore, the optimal value $\mathcal{F}_1(\Pnl)$ attained by problem \eqref{eq:opt:P} is not greater than the optimal value $\mathcal{F}_2(\Pnl)$ attained by problem \eqref{eq:opt:lump} because the feasible region of \eqref{eq:opt:lump} is contained in that of \eqref{eq:opt:P}. Also, Lemma~\ref{lem:lump} and Eq.~\eqref{eq:2kemeny} imply that $\mathcal{F}_2(\Pnl) = \mathcal{F}_3(\Pnl)$. Then, by transitivity, we have that $\mathcal{F}_1(\Pnl) \leq \mathcal{F}_3(\Pnl) = \kemeny(\Pnl)$. 
\end{proof}

The benefit of constructing the argument this way is that we can also prove the more general statement that $\min_{\Pl} \liftK(\Pl) \leq \min_{ \tilde{P} } \kemeny( \tilde{P} )$ for irreducible MCs meeting the corresponding stationary distribution constraints. Simply remove the last constraint in problems \eqref{eq:opt:P}--\eqref{eq:opt:barP} and set $\Pnl$ in \eqref{eq:opt:lump} equal to the minimizer of \eqref{eq:opt:barP}, and the same argument applies. We can also make the same argument for the lifted SGCP and lifted RTE metrics. We just need to show, for both metrics, that a lumpable MC $\Pl$ achieves the same value as the lumped MC $\Pnl$. 

\begin{proof}[Proof of Corollary~\ref{cor:lmi}]
    If we can show that lumpability of $\Pl$ with respect to $V$ implies that $\mathcal{J}^\textup{lift}(\Pl) = \mathcal{J}(\Pnl)$ and $\mathcal{H}^\textup{lift}(\Pl) = \mathcal{H}(\Pnl)$, then the rest of the argument is identical to that of the Kemeny constant. Starting with the SGCP metric, we can use Lemma~\ref{lem:lump} to substitute $VF_k$ for $F_k^\textup{lift}$ in Eq.~\eqref{eq:Psi_lift}. Then using $UV = I_\nnl$ yields $\Psi^\textup{lift} = \Psi$ which implies that $\mathcal{J}^\textup{lift}(\Pl) = \mathcal{J}(\Pnl)$. For the RTE metric, using Lemma~\ref{lem:lump} on Eq.~\eqref{eq:R_k} implies that $R_k(S_j) = F_k(j,j)$ which yields $\mathcal{H}^\textup{lift}(\Pl) = \mathcal{H}(\Pnl)$. 
\end{proof}

\subsection{Lower Bound in Theorem~\ref{thm:bounds}}\label{sec:lower_bound}
In this section we will prove the lower bound presented in Theorem~\ref{thm:bounds}. First, we will need the following lemma whose proof is given in Appendix~\ref{app:lem_4}.
\begin{lemma}\label{lem:m_Q_rel}
    For any irreducible MC $P$, the stationary-distribution-weighted sum of mean first passage times from a non-empty set $A \subsetneq \Snl$ to its complement $A^c$ satisfies 
    \begin{equation}\label{eq:m_Q_rel}
    \sum_{i \in A} \pi_i m_i^{A^c} \geq \frac{\pi(A)^2}{2Q(A,A^c)}.
    \end{equation}
\end{lemma}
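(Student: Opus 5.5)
The plan is a renewal-reward (excursion) argument for the stationary chain, finished with Jensen's inequality. For $i \in A$, the quantity $m_i^{A^c}$ is the expected remaining time the chain spends before leaving $A$, where the step that lands in $A^c$ is counted. So $\sum_{i\in A}\pi_i m_i^{A^c}$ is the long-run time average of the ``residual sojourn time in $A$.'' I will compute this average sojourn by sojourn.

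\textbf{Step 1 (sojourn rate and entrance law).} Run $P$ from stationarity, or from any state, and use the ergodic limit in Eq.~\eqref{eq:stat_dist} for state visits and for transitions. The long-run frequency of transitions from $A^c$ into $A$ is $Q(A^c,A)$. Since $Q\1_n = Q^\top\1_n = \pi$, we have $Q(A^c,A)=Q(A,A^c)$. Irreducibility and $A\neq\emptyset$, $A^c\neq\emptyset$ give $Q(A,A^c)>0$. Each such transition begins a sojourn in $A$. Let $L$ be the sojourn length, meaning the number of consecutive time steps spent in $A$. Its law is that of the exit time from $A$ under the entrance distribution
\[
\nu_j=\frac{\sum_{i\in A^c} q_{ij}}{Q(A,A^c)}, \qquad j \in A.
\]
Counting visits to $A$ sojourn by sojourn, the ergodic theorem gives $\pi(A) = Q(A,A^c)\,\E_\nu[L]$. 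This is a Kac-type identity, so $\E_\nu[L]=\pi(A)/Q(A,A^c)$.

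\textbf{Step 2 (renewal-reward for residual times).} Consider a sojourn of length $L$ occupying times $t,\dots,t+L-1$. At time $t+s$ the chain is in some state $i\in A$, and by the strong Markov property the conditional expected remaining exit time is $m_i^{A^c}$. On that sojourn, the sum of these residual times has conditional expectation equal to the expectation of $L+(L-1)+\dots+1 = L(L+1)/2$. Averaging over the ergodic trajectory therefore gives
\[
\sum_{i\in A}\pi_i m_i^{A^c} = Q(A,A^c)\,\E_\nu\!\left[\tfrac{L(L+1)}{2}\right].
\]
This can also be checked by the strong law applied to the i.i.d.-like sojourn sequence. Alternatively, it can be verified algebraically: use $h=(I-P_{AA})^{-1}\1$, and note that the stationary row restricted to $A$ satisfies $\pi_A^\top(I-P_{AA}) = Q(A,A^c)\,\nu^\top$.

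\textbf{Step 3 (conclusion).} Drop the $+L$ term, or keep it, since it only helps. Then apply Jensen's inequality $\E_\nu[L^2]\ge(\E_\nu[L])^2$:
\[
\sum_{i\in A}\pi_i m_i^{A^c} \ge \frac{Q(A,A^c)}{2}\left(\frac{\pi(A)}{Q(A,A^c)}\right)^2 = \frac{\pi(A)^2}{2Q(A,A^c)}.
\]

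\textbf{Main obstacle.} The delicate step is making Step 2 rigorous without hand-waving about renewal theory. I would do it with the algebraic route. Write $h = \sum_{k\ge0}P_{AA}^k\1$ and $\pi_A^\top = Q(A,A^c)\,\nu^\top (I-P_{AA})^{-1}$. Then
\[
\pi_A^\top h = Q(A,A^c)\,\nu^\top\sum_{k\ge0}(k+1)P_{AA}^k\1 .
\]
Since $\nu^\top P_{AA}^k\1=\Pr_\nu(L>k)$, this equals $Q(A,A^c)\sum_{k\ge0}(k+1)\Pr_\nu(L>k) = Q(A,A^c)\,\E_\nu[L(L+1)/2]$. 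The same computation with weight $1$ in place of $(k+1)$ yields $\pi(A)=Q(A,A^c)\E_\nu[L]$. Both sums converge because $P_{AA}$ is substochastic with spectral radius below $1$ by irreducibility.
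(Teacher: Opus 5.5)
Your proof is correct, and it takes a genuinely different route from the paper's.

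\textbf{The paper's route.} The paper writes $\sum_{i\in A}\pi_i m_i^{A^c}=\sum_{k\ge0}c_k$ with $c_k=\pi_A^\top P_A^k\1$. It shows $c_k$ is convex using only $\pi_A^\top P_A\le\pi_A^\top$. It then bounds this left Riemann sum below by the area of the triangle under the secant through $(0,c_0)$ and $(1,c_1)$, whose legs are $\pi(A)$ and $\pi(A)/Q(A,A^c)$.

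\textbf{Your route.} You use the balance identity $\pi_A^\top(I-P_{AA})=Q(A,A^c)\,\nu^\top$. This turns the sum into $Q(A,A^c)\,\E_\nu[L(L+1)/2]$ and gives the Kac-type relation $\pi(A)=Q(A,A^c)\,\E_\nu[L]$. Cauchy--Schwarz then finishes.

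\textbf{How they relate.} In your notation, $c_k=Q(A,A^c)\,\E_\nu[(L-k)^+]$ and $c_k-c_{k+1}=Q(A,A^c)\Pr_\nu(L>k)$. So the paper's convexity step is just the monotone decay of the tail of $L$, and its triangle is a continuous relaxation of your second-moment inequality.

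\textbf{What each buys.} The paper's argument is shorter and never needs the entrance law $\nu$. Yours gives the sum a clear probabilistic meaning. Because you keep the $L/2$ term, it also yields the sharper bound $\frac{\pi(A)^2}{2Q(A,A^c)}+\frac{\pi(A)}{2}$. This holds with equality for a deterministic cycle with $A$ an arc, and that extra term is exactly what the triangle discards. Carried through the proof of Theorem~\ref{thm:bounds}, it would give $\liftK(\Pl)\ge[2\Phi_A(\Pnl)]^{-1}+\pinl(A)\pinl(A^c)$ for every nonempty proper $A$.

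\textbf{Minor remark.} The renewal-reward heuristic in Step~2 should not lean on an ``i.i.d.-like'' sojourn sequence, since successive sojourns are generally dependent through their entrance states. Your algebraic verification is complete on its own and is what actually carries the proof.
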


Now we can prove the lower bound in Theorem~\ref{thm:bounds}.
\begin{proof}[Proof of Lower Bound in Theorem~\ref{thm:bounds}]
    First, we will show that the ergodic flow has a symmetry property, i.e., $Q(A,A^c) = Q(A^c,A)$ where $A$ is a non-empty proper subset of the nodes and $A^c$ is its complement. From the property $Q^\top \1_n = Q \1_n = \pi$, we have that
    \[
    \begin{aligned}
    \pi(A) & = \sum_{i \in A} \sum_{j \in A} q_{ij} + \sum_{i \in A} \sum_{j \in A^c} q_{ij} = Q(A,A) + Q(A,A^c), \\
    \pi(A) & = \sum_{i \in A} \sum_{j \in A} q_{ij} + \sum_{i \in A^c} \sum_{j \in A} q_{ij} = Q(A,A) + Q(A^c,A),
    \end{aligned}
    \]
    which implies that $Q(A,A^c) = Q(A^c,A)$. 

    Now we are ready to lower bound $\liftK$. We suppress the dependence on $\Pl$ for brevity. From $V$, construct $S_A = \cup_{j \in A} S_j$, the set of all virtual states in $\Sl$ corresponding to the nodes in $A$. From the definition, we have
    \[
    \begin{aligned}
    \liftK & = \sum_{i \in \Sl} \pil_i \sum_{j \in \Snl} m_i^{S_j} \pinl_j \\
    & \geq \sum_{i \in S_A} \pil_i \sum_{j \in A^c} m_i^{S_j} \pinl_j + \sum_{i \in S_{A^c}} \pil_i \sum_{j \in A} m_i^{S_j} \pinl_j.
    \end{aligned}
    \]
    Note that $m_i^{S_j} \geq m_i^{S_{A^c}}$ for all $j \in A^c$ and $m_i^{S_j} \geq m_i^{S_A}$ for all $j \in A$. Therefore we can lower bound each term as
    \[
    \begin{aligned}
    \sum_{i \in S_A} \pil_i \sum_{j \in A^c} m_i^{S_j} \pinl_j & \geq \pinl(A^c) \sum_{i \in S_A} \pil_i m_i^{S_{A^c}}, \\
    \sum_{i \in S_{A^c}} \pil_i \sum_{j \in A} m_i^{S_j} \pinl_j & \geq \pinl(A) \sum_{i \in S_{A^c}} \pil_i m_i^{S_A}.
    \end{aligned}
    \]
    Then using Lemma~\ref{lem:m_Q_rel}, the symmetry of $Q$, and the definition of lifting we arrive at
    \[
    \begin{aligned}
    \liftK & \geq \frac{\pinl(A^c)\pil(S_A)^2}{2 \Ql(S_A,S_{A^c})} + \frac{\pinl(A)\pil(S_{A^c})^2}{2 \Ql(S_{A^c},S_A)} \\
    & = \frac{\pinl(A)\pinl(A^c)}{2 \Qnl(A,A^c)} = \frac{1}{2\Phi_A(\Pnl)}.
    \end{aligned}
    \]
    Optimizing over $A$ subject to $A \subsetneq \Snl, A\neq\emptyset$ yields the lower bound of $[2\Phi(\Pnl)]^{-1}$. Because this holds for any lifting of $\Pnl$ with respect to $V$, it also holds for the minimizing lifting.
\end{proof}
Theorem~\ref{thm:bounds} trivially yields the following corollary, but it can also be shown directly in a similar manner to the above proof. 
\begin{corollary}[Conductance--Kemeny Constant Relationship]\label{cor:cond_bound_K}
    For any irreducible MC $P$, the Kemeny constant $\kemeny$ is related to the conductance $\Phi$ as $[2 \Phi(P)]^{-1} \leq \kemeny(P)$.
\end{corollary}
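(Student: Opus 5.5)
The quickest route is to specialize Theorem~\ref{thm:bounds} to the trivial partition. Take $V = I_\nnl$. Then $U = I_\nnl$, and the lifting conditions in Eq.~\eqref{eq:coll} force $\pil = \pinl$ and $\Pl = \Pnl$. So the only irreducible lifting of $P$ with respect to $I_\nnl$ is $P$ itself, and $\mathcal{L}(P,I_\nnl) = \{P\}$. With $V = I_\nnl$ each set $S_j = \{j\}$, so $m_i^{S_j} = m_i^j$ and $M^\textup{lift} = M$. Eq.~\eqref{eq:liftK} then collapses to $\pi^\top M \pi = \kemeny(P)$. The minimum in Eq.~\eqref{eq:perf_bounds} is therefore $\kemeny(P)$, and the left inequality of Theorem~\ref{thm:bounds} gives $[2\Phi(P)]^{-1} \le \kemeny(P)$.

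As a check, I would also give the direct argument, which mirrors the proof of the lower bound. Fix a nonempty $A \subsetneq \Snl$. Start from $\kemeny(P) = \sum_i \pi_i \sum_j m_i^j \pi_j$ and discard the terms with $i,j$ both in $A$ or both in $A^c$; all terms are nonnegative. For $i \in A$ and $j \in A^c$ we have $m_i^j \ge m_i^{A^c}$, since hitting $j$ requires hitting $A^c$; symmetrically for $i \in A^c$, $j \in A$. This yields
\[
\kemeny(P) \ge \pi(A^c)\sum_{i\in A}\pi_i m_i^{A^c} + \pi(A)\sum_{i\in A^c}\pi_i m_i^{A}.
\]
Apply Lemma~\ref{lem:m_Q_rel} to each sum, once with $A$ and once with $A^c$ in the role of the set. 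Then use the flow symmetry $Q(A,A^c) = Q(A^c,A)$ established in the preceding proof. This gives
\[
\kemeny(P) \ge \frac{\pi(A^c)\pi(A)^2 + \pi(A)\pi(A^c)^2}{2Q(A,A^c)} = \frac{\pi(A)\pi(A^c)}{2Q(A,A^c)} = \frac{1}{2\Phi_A(P)},
\]
where the middle equality uses $\pi(A) + \pi(A^c) = 1$. Finally, choose $A$ to attain $\Phi(P)$.

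I expect no real obstacle. The only point needing care is that the discarded diagonal-block terms, including the return times $m_i^i$, are nonnegative, so dropping them is valid. It is also worth noting explicitly that no reversibility is used: the symmetry of $Q$ across a cut holds for any stationary chain.
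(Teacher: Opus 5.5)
Your proposal is correct and matches the paper, which notes that the corollary follows trivially from Theorem~\ref{thm:bounds}; this is your specialization to $V = I_n$, where $\mathcal{L}(P,I_n)=\{P\}$ and $\liftK(P)=\kemeny(P)$. The paper adds that the corollary can also be shown directly by repeating the lower-bound argument, and your direct version (restricting to the cut blocks, using $m_i^j \ge m_i^{A^c}$, applying Lemma~\ref{lem:m_Q_rel} on both sides of the cut, and invoking $Q(A,A^c)=Q(A^c,A)$) is exactly that argument with $V = I_n$.
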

This result is of independent interest because it provides an analog of the known relationship $\Phi(P_\textup{rev})^{-1}+1 \leq \kemeny(P_\textup{rev})$ for reversible MCs~\cite[Ch.~4]{DA-JAF:02}. Due to the monotonicity property of the conductance with respect to lifting, Corollary~\ref{cor:cond_bound_K} also yields the following bound on the possible improvement in the classical Kemeny constant.
\begin{corollary}[Kemeny Constant Bound]
    The Kemeny constant of the optimal lifted MC for a given collapsed MC $\Pnl$ and partition $V$ is related to the conductance $\Phi$ as $[2\Phi(\Pnl)]^{-1} \leq \min_{\Pl \in \mathcal{L}(\Pnl,V)} \kemeny(\Pl)$.
\end{corollary}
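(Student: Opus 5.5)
The plan is to combine Corollary~\ref{cor:cond_bound_K} with the monotonicity of conductance under lifting, $\Phi(\Pl) \leq \Phi(\Pnl)$, from~\cite{FC-LL-IP:99}.

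Fix an arbitrary irreducible lifting $\Pl \in \mathcal{L}(\Pnl,V)$. First, apply Corollary~\ref{cor:cond_bound_K} directly to $\Pl$, regarded as an ordinary irreducible MC on the state space $\Sl$. This gives $[2\Phi(\Pl)]^{-1} \leq \kemeny(\Pl)$.

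Second, use $\Phi(\Pl)\leq\Phi(\Pnl)$ to get $[2\Phi(\Pnl)]^{-1} \leq [2\Phi(\Pl)]^{-1}$. Both conductances are strictly positive because the chains are irreducible, so taking reciprocals is legitimate. Chaining the two inequalities yields $[2\Phi(\Pnl)]^{-1}\leq \kemeny(\Pl)$ for every lifting.

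Since the left-hand side does not depend on $\Pl$, passing to the infimum over $\mathcal{L}(\Pnl,V)$ gives the claim. If the minimum is not attained, the statement should be read as a bound on the infimum.

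The only step requiring care is the monotonicity $\Phi(\Pl)\le\Phi(\Pnl)$. If I verify it rather than cite it, I would take any nonempty proper $A\subsetneq\Snl$ and its preimage $S_A=\cup_{j\in A}S_j$. By Eq.~\eqref{eq:coll_Q}, $\Ql(S_A,S_{A^c})=\Qnl(A,A^c)$. By Eq.~\eqref{eq:coll}, $\pil(S_A)=\pinl(A)$ and $\pil(S_{A^c})=\pinl(A^c)$. Hence $\Phi_{S_A}(\Pl)=\Phi_A(\Pnl)$. The minimum defining $\Phi(\Pl)$ ranges over a larger family of sets, so it is at most the minimum over the sets $S_A$, which equals $\Phi(\Pnl)$.

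An alternative route, closer to the proof of the lower bound in Theorem~\ref{thm:bounds}, uses $\kemeny(\Pl)\ge\liftK(\Pl)$ together with that lower bound. This also yields the corollary immediately.
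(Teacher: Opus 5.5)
Your proposal is correct and matches the paper's argument: apply Corollary~\ref{cor:cond_bound_K} to the lifted chain $\Pl$, then invoke the monotonicity $\Phi(\Pl)\le\Phi(\Pnl)$ from~\cite{FC-LL-IP:99}. Your self-contained check of that monotonicity, using the preimage sets $S_A$ and Eq.~\eqref{eq:coll_Q}, is also sound, and so is your alternative route through $\kemeny(\Pl)\ge\liftK(\Pl)$ and the lower bound of Theorem~\ref{thm:bounds}.
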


\section{Deception via Lifting}
In this section, we describe an additional benefit of constructing lifted MCs according to Eq.~\eqref{eq:coll}. Consider an adversary that assumes the patroller is using a standard non-lifted MC. If the patroller is actually using a lifted MC, then the adversary commits a \emph{model misspecification error}~\cite{LL:78b, HW:82}. We will refer to an adversary that makes this error as a \emph{naive} adversary. If the adversary attempts to estimate a non-lifted MC from the patroller's trajectory on the graph, then we can prove that standard estimators will converge to $\Pnl$. $\Pnl$ is also the best estimate in the following senses: it is the KL-optimal aggregation of $\Pl$ with respect to $V$~\cite{KD-PGM-SPM:11}, and it minimizes the KL divergence rate to the lumped chain of $\Pl$ (defined in Appendix~\ref{app:lumpability})~\cite{BCG-TP-GK-HK:15}. Now we prove that standard estimators yield $\Pnl$. We consider the following class of estimators.
\begin{defn}[Plug-In Estimator]\label{def:plugin}
    An MC estimator is a \emph{plug-in estimator} if it satisfies, for $i,j \in \Snl$, 
    \[
    \lim_{T\to\infty} \hat{p}_{ij}(\mathbf{x}_T) = \lim_{T\to\infty} \frac{n_{ij}(\mathbf{x}_T)}{n_i(\mathbf{x}_T)},
    \]
    where $\hat{p}_{ij}(\mathbf{x}_T)$ is the estimated transition probability based on a state trajectory $\mathbf{x}_T$ of length $T$ and $n_{ij},n_i$ are empirical transition counts and empirical visit counts, respectively.
\end{defn}
Examples of plug-in estimators include maximum likelihood estimators, Bayesian estimators or method of moments estimators, etc. Now we state the result; the proof is in Appendix~\ref{app:deception}. 
\begin{prop}[Plug-In Estimate Converges to Collapsing]\label{prop:deception}
    Let $\xnl_T$ be the projection onto $\Snl$ of a length-$T$ state trajectory $\xl_T$ on $\Sl$ of an irreducible lifting $\Pl$ of $\Pnl$ with respect to $V$. Then every plug-in estimate $\hat{P}$ satisfies
    \begin{equation}\label{eq:deception}
        \lim_{T\to\infty} \hat{P}(\mathbf{x}_T) = \Pnl, \quad \textup{a.s.}
    \end{equation}
\end{prop}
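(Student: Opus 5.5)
The plan is to reduce the claim to the ergodic theorem, Eq.~\eqref{eq:stat_dist}, applied to a suitable irreducible chain on the lifted space, and then read off the answer from Eq.~\eqref{eq:coll_elementwise}. By Definition~\ref{def:plugin} it suffices to show that, for every pair $i,j \in \Snl$,
\[
\frac{n_{ij}(\xnl_T)}{n_i(\xnl_T)} \to \pnl_{ij} \quad \textup{a.s.}
\]
The projected trajectory $\xnl_T$ is not itself Markov, so I will express both counts through the lifted trajectory $\xl_T$. Since $\Xnl_k = i$ exactly when $\Xl_k \in S_i$, we have
\[
n_i(\xnl_T) = \sum_{k \in S_i} n_k(\xl_T), \qquad n_{ij}(\xnl_T) = \sum_{k \in S_i} \sum_{l \in S_j} n_{kl}(\xl_T).
\]

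\textbf{Step 1 (visit counts).} Because $\Pl$ is irreducible, Eq.~\eqref{eq:stat_dist} gives $n_k(\xl_T)/T \to \pil_k$ a.s. for each $k \in \Sl$. Summing over $k \in S_i$ yields $n_i(\xnl_T)/T \to \pil(S_i) = \pinl_i > 0$.

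\textbf{Step 2 (transition counts).} I will use the pair chain $Y_t = (\Xl_t, \Xl_{t+1})$. Its state space is the set of pairs $(k,l)$ with $\pl_{kl} > 0$, and its transition probability from $(k,l)$ to $(l,m)$ is $\pl_{lm}$. It is irreducible on that set, since irreducibility of $\Pl$ lets one get from any edge to any other edge. Its stationary distribution is $\pil_k \pl_{kl}$, which is checked directly from $(\pil)^\top \Pl = (\pil)^\top$. Applying Eq.~\eqref{eq:stat_dist} to $Y$ gives $n_{kl}(\xl_T)/T \to \pil_k \pl_{kl}$ a.s. The counts differ by at most one between trajectories of length $T$ and $T-1$, which is negligible. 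As a fallback, a martingale strong law for $\sum_t (\bm{1}\{\Xl_t=k,\Xl_{t+1}=l\} - \bm{1}\{\Xl_t = k\}\pl_{kl})$ gives the same limit.

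\textbf{Step 3 (conclusion).} There are only finitely many pairs $(i,j)$ and $(k,l)$, so the countable intersection of the almost-sure events from Steps 1 and 2 still has probability one. On that event the ratio converges to
\[
\frac{\sum_{k\in S_i}\pil_k\sum_{l\in S_j}\pl_{kl}}{\pil(S_i)} = \pnl_{ij},
\]
by the second line of Eq.~\eqref{eq:coll_elementwise}; equivalently, the entries of $\Pinl^{-1} V^\top \Ql V = \Pnl$ by Eq.~\eqref{eq:coll_Q}. This proves Eq.~\eqref{eq:deception}.

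The main obstacle is Step 2: justifying the a.s. limit of the pair counts, in particular checking irreducibility of the pair chain restricted to its positive-probability edges, and handling the off-by-one indexing of transitions within a length-$T$ trajectory. Everything else is bookkeeping.
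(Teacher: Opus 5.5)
Your proposal is correct and takes essentially the same route as the paper's proof. Both write the projected counts as sums of lifted counts, apply Eq.~\eqref{eq:stat_dist} to the visit counts and to the pair chain on $\{(k,l) : \pl_{kl} > 0\}$ with stationary distribution $\pil_k \pl_{kl}$, and identify the limiting ratio with the lifting condition in Eq.~\eqref{eq:coll_elementwise}; your handling of the off-by-one indexing and the intersection of finitely many almost-sure events fills in details the paper leaves implicit.
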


The takeaway is that a naive adversary that uses a plug-in estimator will arrive at $\Pnl$. By multiple definitions this is also the optimal estimate~\cite{KD-PGM-SPM:11, BCG-TP-GK-HK:15}. Then relying on the estimate $\Pnl$ when the patroller is actually using $\Pl$ can lead the adversary to underestimate the performance of the patroller. Even for a non-naive adversary, estimating $\Pl$ from its trajectory on $\Snl$ is difficult~\cite{YO:21}.

\subsection{Corridor Environment: Plug-In Estimate Convergence}
Returning to our running example, we can simulate $\Pl$ shown in Fig.~\ref{fig:motivating_example}(b) with, say, $\pl = 0.9$ and generate a virtual state trajectory $\xl_T$ and a projected graph trajectory $\xnl_T$. We then assume the role of the adversary that has observed the patroller and recorded $\xnl_T$. If the adversary naively performs maximum likelihood estimation of a non-lifted MC, the resulting estimate $\hat{P}$ converges to $\Pnl$ shown in Fig.~\ref{fig:motivating_example}(a) in the limit as the length of the recorded trajectory approaches infinity. Fig.~\ref{fig:T_vs_est_err} shows a Monte Carlo simulation of the convergence of the estimate $\hat{P}$ to $\Pnl$ as the trajectory length $T$ increases. If the adversary plans an attack based on $\hat{P}$, then they are liable to experience worse performance than expected. Because $\liftK(\Pl) < \kemeny(\Pnl)$ as shown in Fig.~\ref{fig:p_vs_M}, the adversary will have less time than they expect to execute an attack.

\begin{figure}
\centering
\includegraphics[width=\linewidth]{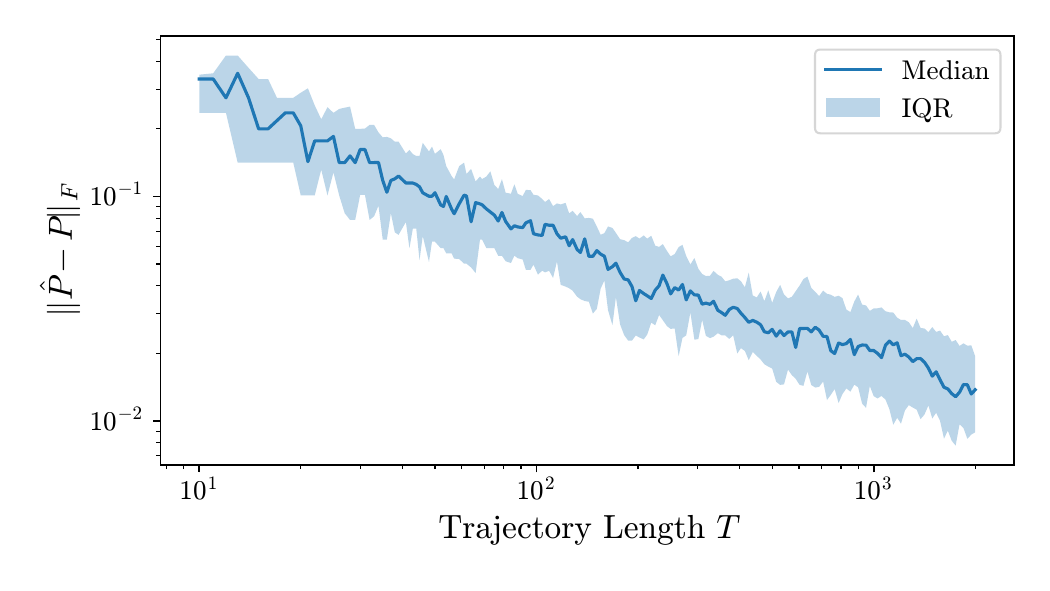}
\caption{Log-log plot of the Frobenius norm of the difference between the maximum likelihood estimate of the MC $\hat{P}$ and the collapsed MC $\Pnl$ from Fig.~\ref{fig:motivating_example}(a) as a function of trajectory length $T$. The median and interquartile range are shown from 50 Monte Carlo runs. It can be seen that $\hat{P}$ converges to $\Pnl$ as $T\to\infty$.}
\label{fig:T_vs_est_err}
\end{figure}

\section{Optimizing Lifted Markov Chains}\label{sec:lift_opt}
In this section, we present a method for optimizing lifted MCs. The method is applicable to all three lifted metrics. The problem data are the non-lifted MC $\Pnl$ and the indicator matrix $V$. The goal is to identify a lifting $\Pl$ of $\Pnl$ with respect to $V$ that optimizes the metric of interest $\mathcal{Y}$. This can be written as a minimization without loss of generality:
\begin{equation}\label{eq:opt:liftP}
    \min_{\Pl\in\mathcal{L}(\Pnl,V)} \mathcal{Y}(\Pl).
\end{equation}

Recall the nonlinear constraints ensuring a valid lifting given in Eq.~\eqref{eq:coll}. If we instead formulate the problem in terms of the ergodic flows $\Qnl,\Ql$, then the constraints become linear as shown in Eq.~\eqref{eq:coll_Q}. Therefore, we can rewrite the optimization problem~\eqref{eq:opt:liftP} as
\begin{equation} \label{eq:opt:liftQ} \begin{array}{cl} 
        \underset{ \Ql }{ \min } & \mathcal{Y}(\Ql)  \\
        \text{s.t.} & \Ql \1_{\nl} = \left( \Ql \right)^\top \1_{\nl}, \\
            & \epsilon V \lceil \Qnl \rceil V^\top \leq \Ql \leq V \lceil \Qnl \rceil V^\top, \\
            & V^\top \Ql V = \Qnl, \\
    \end{array} \end{equation}
where $0 < \epsilon \ll 1$ and $\lceil \Qnl \rceil$ denotes the element-wise ceiling of $\Qnl$. The first constraint (combined with the third) ensures a valid ergodic flow matrix $\Ql$. The lower bound in the second constraint ensures irreducibility, and the upper bound ensures graph-compatibility. The third constraint is Eq.~\eqref{eq:coll_Q} that ensures a valid lifting of $\Qnl$ with respect to $V$. We propose a Projected Gradient Descent (PGD) approach to optimize lifted MC metrics subject to the linear constraints given in Eq.~\eqref{eq:opt:liftQ}. In each iteration, the gradient descent step $\Qlt_{t+1} = \Ql_t - \alpha \frac{\partial \mathcal{Y}}{\partial \Ql}(\Ql_t)$ with step size $\alpha$ is followed by a projection step
\begin{equation} \label{eq:qp} \begin{array}{cl} 
        \underset{ \Ql_{t+1} }{ \min } & \lVert \Ql_{t+1} - \Qlt_{t+1} \rVert_F^2  \\
        \text{s.t.} 
            & \Ql_{t+1} \1_{\nl} = \left( \Ql_{t+1} \right)^\top \1_{\nl}, \\
            & \epsilon V \lceil \Qnl \rceil V^\top \leq \Ql_{t+1} \leq V \lceil \Qnl \rceil V^\top, \\
            & V^\top \Ql_{t+1} V = \Qnl, \\
    \end{array} \end{equation}
which is a convex quadratic program. 

\subsection{Travel Times}
Now we generalize the lifted metrics to accommodate weighted graphs where the weights represent edge travel times. We will reuse the same symbols because unweighted graphs can be handled by setting all weights to one. Consider the lifted weight matrix $\Wl = V \Wnl V^\top$ where the entries $\wnl_{ij} \in \natural_{>0}$ of $\Wnl$ represent the number of time steps it takes the patroller to traverse the edge from node $i$ to node $j$. We define the \emph{weighted} set first passage time
\[
    T_i^{S_j} = \min_k \left\{ \sum_{t=0}^{k-1} \wl_{\Xl_t,\Xl_{t+1}}\, \middle| \, \Xl_0 = i, \Xl_k \in S_j, k\geq1 \right\}. 
\]

For the lifted Kemeny constant, it can be shown that the \emph{weighted} set mean first passage times satisfy
\begin{equation}\label{eq:liftM}
    m^{S_j} = [I_{\nl} - \Pl(I_{\nl} - D_j)]^{-1} (\Pl \circ \Wl) \1_{\nl},
\end{equation}
where $m^{S_j}$ contains $m^{S_j}_i = \E [T_i^{S_j}]$ for all virtual states $i\in\Sl$. Then using $M^\textup{lift}$, the matrix whose entries are $m_i^{S_j}$, in Eq.~\eqref{eq:liftK} extends the lifted Kemeny constant to weighted graphs. Note that setting $\Wl = \1_{\nl\times\nl}$ in Eq.~\eqref{eq:liftM} recovers Eq.~\eqref{eq:m_Sj}. For the lifted SGCP metric and the lifted RTE metric, we can generalize the lifted first passage time probability recursion given in Eq.~\eqref{eq:F_k_lift}:
\begin{equation}
\begin{aligned}
    F_k^\textup{lift} & = [\Pl \circ \mathbf{1}_{k}(\Wl)] V \\
    & + \sum_{w=1}^{w_\textup{max}} [\Pl \circ \mathbf{1}_{w}(\Wl)] \left[F_{k-w}^\textup{lift} - \left( F_{k-w}^\textup{lift} \circ V \right) \right]
\end{aligned}
\end{equation}
where we define $F_k^\textup{lift} = \0_{\nl\times\nnl}$ for $k\leq0$ and $\mathbf{1}_{z}(\Wl)$ is an $\nl \times \nl$ matrix that takes the value one at entries of $\Wl$ equal to $z$ and zero otherwise. Calculating $F_k^\textup{lift}$ as above allows us to use Eq.~\eqref{eq:Psi_lift} and Eq.~\eqref{eq:H_lift} unchanged for computing the lifted SGCP and lifted RTE metrics with weighted graphs. 

For the lifted RTE metric thus defined, we need to ensure that the expected weighted lifted return time equals the expected weighted return time. Given the non-lifted MC $\Pnl$, it is known that $\E[T_j^j] = [\pinl^\top (\Pnl \circ \Wnl) \1_\nnl]/\pinl_j$ for all nodes $j$~\cite[Thm. 6]{RP-PA-FB:14b}. The following lemma whose proof is in Appendix~\ref{app:return_time} provides the desired result for lifted MCs.
\begin{lemma}\label{lem:mean_return_time}
    Given edge weights $\Wnl$, partition $V$, and an irreducible lifting $\Pl$ of the MC $\Pnl$, the expected lifted weighted return time satisfies $\sum_{k=1}^\infty k R_k(S_j) = [\pinl^\top (\Pnl \circ \Wnl) \1_\nnl]/\pinl_j$ for all $j \in \Snl$ where $R_k(S_j) = (1/\pinl_j) \sum_{l \in S_j} \pil_l \Pr[T_l^{S_j} = k]$ is a function of the weighted set first passage times. 
\end{lemma}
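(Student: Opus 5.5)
The plan is to follow the unweighted argument via Kac's lemma, replacing it with a weighted Kac formula for set return times. Write the left-hand side as
\[
\sum_{k} k R_k(S_j) = \frac{1}{\pinl_j}\sum_{l\in S_j}\pil_l\, m_l^{S_j},
\]
where $m_l^{S_j}$ is the weighted set mean first passage time from Eq.~\eqref{eq:liftM}. The goal is then to show $\sum_{l\in S_j}\pil_l m_l^{S_j} = (\pil)^\top(\Pl\circ\Wl)\1_{\nl}$. After that, we collapse this quantity using $\Wl = V\Wnl V^\top$ and Eq.~\eqref{eq:coll_Q}.

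\textbf{Step 1 (weighted Kac).} Set $c = (\Pl\circ\Wl)\1_{\nl}$, the vector of expected one-step costs. Write $A = \Pl(I_{\nl}-D_j)$, so Eq.~\eqref{eq:liftM} reads $m^{S_j} = c + A\,m^{S_j}$. Multiply on the left by $(\pil)^\top$ and use stationarity, $(\pil)^\top\Pl = (\pil)^\top$. This gives
\[
(\pil)^\top m^{S_j} = (\pil)^\top c + (\pil)^\top(I_{\nl}-D_j)\,m^{S_j}.
\]
Cancelling the common terms leaves $(\pil)^\top D_j m^{S_j} = (\pil)^\top c$, which is exactly $\sum_{l\in S_j}\pil_l m_l^{S_j} = (\pil)^\top(\Pl\circ\Wl)\1_{\nl}$. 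All quantities are finite because $\Pl$ is irreducible, so $I-A$ is invertible and $(\pil)^\top m^{S_j}<\infty$.

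\textbf{Step 2 (collapse).} We have $(\pil)^\top(\Pl\circ\Wl)\1_{\nl} = \1^\top(\Ql\circ V\Wnl V^\top)\1$. Because $V$ is a $0/1$ partition indicator, the entry $(V\Wnl V^\top)_{kl}$ equals $\wnl_{ab}$ for $k\in S_a$ and $l\in S_b$. Summing block by block gives
\[
\sum_{a,b}\wnl_{ab}\,(V^\top\Ql V)_{ab} = \sum_{a,b}\wnl_{ab}\,q_{ab} = \1^\top(\Qnl\circ\Wnl)\1 = \pinl^\top(\Pnl\circ\Wnl)\1_\nnl,
\]
where the middle equality uses Eq.~\eqref{eq:coll_Q}. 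Dividing by $\pinl_j$ then yields the claim.

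The main obstacle is Step 1, specifically confirming that Eq.~\eqref{eq:liftM} really encodes the weighted passage time including the final step into $S_j$, and that $R_k$ is indexed by elapsed weighted time so that its mean equals $\pil$-averaged $m_l^{S_j}$. Both are checked by first-step analysis: the first transition costs $\wl_{i,X_1}$, and the process continues only if $X_1\notin S_j$. Once this is verified, the remaining steps are algebraic.
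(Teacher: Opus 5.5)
Your proposal is correct and follows essentially the same route as the paper. You pre-multiply the first-step equation~\eqref{eq:liftM} by $(\pil)^\top$ and use stationarity to get $(\pil)^\top D_j m^{S_j} = (\pil)^\top(\Pl\circ\Wl)\1_{\nl}$, then collapse the right-hand side block by block via $\Wl = V\Wnl V^\top$ and $\Qnl = V^\top\Ql V$; your remarks on finiteness and on checking Eq.~\eqref{eq:liftM} by first-step analysis fill in details the paper leaves implicit.
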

Therefore, the lifted RTE metric can be directly compared to the RTE metric on weighted graphs. We can also use Lemma~\ref{lem:mean_return_time} to choose $K_\eta = \lceil \wnl_\textup{max} / (\eta \pinl_\textup{min}) \rceil - 1$ and ensure that $\eta$ upper bounds the truncated probability by a similar argument as in~\cite{XD-MG-FB:17o}. 

\subsection{Gradient of the Lifted Kemeny Constant}
For the lifted Kemeny constant we are able to provide a closed form for the gradient. We will present the gradient of the lifted Kemeny constant for weighted graphs with the understanding that setting $\Wl = \1_{\nl\times\nl}$ will recover the gradient for unweighted graphs. The proof is given in Appendix~\ref{app:gradient}.
\begin{prop}[Gradient of the Lifted Kemeny Constant]\label{prop:grad_lift_K}
    The gradient of the lifted Kemeny constant for weighted graphs with respect to the ergodic flow $\Ql$ for fixed $\pinl$ is given by
    \begin{align}\label{eq:grad_liftK}
        \frac{\partial \liftK}{\partial \Ql} & = \sum_{j=1}^\nnl \Bigl[ \left( \pinl_j I_{\nl} + \Lambda_j \right) m^{S_j} \1_{\nl}^\top - \left(\lambda_j \1_{\nl}^\top\right) \circ \Wl \notag \\ & 
        - \lambda_j \left(m^{S_j}\right)^\top (I_{\nl} - D_j)   \Bigr],
    \end{align}
    where $\Lambda_j = \textup{diag}(\lambda_j)$ and
    \begin{equation}\label{eq:lambda}
        \lambda_j = -\pinl_j \left[ \Pil - (I_{\nl} - D_j) \left( \Ql \right)^\top \right]^{-1} \pil.
    \end{equation}
\end{prop}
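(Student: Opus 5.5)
The plan is to differentiate $\liftK = \sum_{j} \pinl_j (\pil)^\top m^{S_j}$ with respect to $\Ql$, treating $\Pl = (\Pil)^{-1}\Ql$ and $\pil = \Ql \1_{\nl}$ as functions of $\Ql$. The vector $\pinl$ is held fixed, since it is pinned by $V^\top \Ql V = \Qnl$. The computation will use a Lagrangian/adjoint method, one column $j$ at a time.

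For each $j$, the weighted set mean first passage times are characterized by the linear system
\[
\left[\Pil - \Ql(I_{\nl}-D_j)\right] m^{S_j} = (\Ql \circ \Wl)\1_{\nl},
\]
which is obtained by left-multiplying Eq.~\eqref{eq:liftM} by $\Pil$. I will then introduce the Lagrangian
\[
\mathcal{L}_j = \pinl_j (\pil)^\top m^{S_j} - \lambda_j^\top\Bigl(\bigl[\Pil - \Ql(I_{\nl}-D_j)\bigr]m^{S_j} - (\Ql\circ\Wl)\1_{\nl}\Bigr).
\]
Two sign conventions need checking: the sign of $\lambda_j$, and whether one writes "$-\lambda_j^\top$" or "$+$".

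Stationarity in $m^{S_j}$ gives
\[
\pinl_j \pil = \left[\Pil - (I_{\nl}-D_j)(\Ql)^\top\right]\lambda_j ,
\]
so $\lambda_j$ equals, up to the sign convention, the expression in Eq.~\eqref{eq:lambda}. I will fix the sign convention so that it matches Eq.~\eqref{eq:lambda} exactly. Invertibility of this matrix follows from irreducibility: it is the transpose of $\Pil(I - \Pl(I-D_j))$, and $I - \Pl(I-D_j)$ is invertible because $\Pl(I-D_j)$ is substochastic with spectral radius below one.

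Next I will differentiate $\mathcal{L}_j$ explicitly in $\Ql$, term by term.
\begin{itemize}
\item The objective term $\pinl_j(\pil)^\top m^{S_j}$ depends on $\Ql$ through $\pil=\Ql\1$. Its derivative is $\pinl_j m^{S_j}\1^\top$.
\item The term $-\lambda_j^\top \Pil m^{S_j} = -\sum_i \lambda_{j,i}\pil_i m^{S_j}_i$ contributes $\pm\Lambda_j m^{S_j}\1^\top$.
\item The term $\lambda_j^\top \Ql (I-D_j)m^{S_j}$ contributes $\pm\lambda_j (m^{S_j})^\top(I-D_j)$.
\item The weight term $\lambda_j^\top(\Ql\circ\Wl)\1$ contributes $\pm(\lambda_j\1^\top)\circ\Wl$.
\end{itemize}
Collecting these with consistent signs and summing over $j$ should yield Eq.~\eqref{eq:grad_liftK}. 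By the envelope/adjoint argument, $\partial\liftK/\partial\Ql = \sum_j \partial\mathcal{L}_j/\partial\Ql$ evaluated at the solution.

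The main obstacle is sign and convention bookkeeping. The key choice is how $\lambda_j$ is signed so that the terms $(\pinl_j I+\Lambda_j)$, $-(\lambda_j\1^\top)\circ\Wl$ and $-\lambda_j(m^{S_j})^\top(I-D_j)$ all come out as stated. With the negative sign built into Eq.~\eqref{eq:lambda}, the natural choice is the Lagrangian with a plus sign on the constraint residual. I will verify this by a one-line consistency check: the $\Pil$ term should produce $+\Lambda_j m\1^\top$, which requires $+\lambda_j^\top\Pil m$ in $\mathcal{L}_j$. A further subtlety is that the gradient is with respect to unconstrained $\Ql$, treating $\pil$ as $\Ql\1$ rather than as an independent variable. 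The fixed-$\pinl$ assumption is used only so that $\pinl_j$ carries no derivative. These dependences must be made explicit before differentiating.
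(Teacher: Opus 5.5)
Your proposal is correct and follows essentially the same route as the paper. Both use the adjoint-state method on the $\Pil$-premultiplied form of Eq.~\eqref{eq:liftM}, with the constraint residual entering the Lagrangian with a plus sign; as you observed, this sign reproduces Eq.~\eqref{eq:lambda} and the signs of all three terms in Eq.~\eqref{eq:grad_liftK}, and your invertibility argument via the spectral radius of $\Pl(I_{\nl}-D_j)$ is a small justification the paper leaves implicit.
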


\section{Numerical Results}
In this section we augment the provided theoretical results with numerical experiments demonstrating the effectiveness of lifted MCs as stochastic patrol strategies. Experiments were run on a machine with an Intel Core i7-10710U CPU (6 cores, 1.10 GHz) and 16 GB of RAM. The code is available at: 
\url{https://github.com/yohanjohn307/markov-chain-lifting}.

\subsection{Choosing the Partition}
Thus far we have considered the partition $V$ a free parameter chosen prior to optimization of the lifted MC. This is a subject of study in its own right~\cite{KJ-DS-JS:07}. Here we explore various approaches numerically. We focus on graphs with 10 nodes because large graphs lead to poor coverage for a single patroller. Our testbed consists of 50 Erdős--Rényi (ER) random undirected graphs (with self-loops and conditioned on connectivity) with randomly chosen edge probability $p \sim \textup{Uniform}(0.2, 0.8)$. For each graph, we also randomly sample a stationary distribution $\pinl \sim \textup{Dirichlet}(5\cdot\1_\nnl)$ that avoids extremely small probabilities at any node. For each random graph and stationary distribution, we minimize the Kemeny constant of a non-lifted MC $\Qnl$ via PGD\footnote{It can be shown that the gradient of the Kemeny constant for weighted graphs with respect to $\Qnl$ for fixed $\pinl$ is given by
\[
    \begin{aligned}
    \frac{\partial \kemeny}{\partial \Qnl} & = - \sum_{j=1}^\nnl \left[ \left( \lambda_j \1_\nnl^\top \right) \circ \Wnl + \lambda_j (m^j)^\top (I_\nnl - E_j)  \right] \\
    \lambda_j & = -\pinl_j \left( \Pinl - (I_\nnl - E_j) \Qnl^\top \right)^{-1} \pinl.
    \end{aligned}
\] Note that the first term from Eq.~\eqref{eq:grad_liftK} is absent here because $\pinl$ is fixed, whereas $\pil$ is only subject to the lifting constraint $\pinl = V^\top \pil$.} subject to the stationary distribution constraint $\Qnl \1_\nnl = \Qnl^\top \1_\nnl = \pinl$ and graph constraint $\epsilon \mathcal{A} \leq \Qnl \leq \mathcal{A}$ where $\mathcal{A}$ denotes the binary adjacency matrix of the graph. We sweep over lifting budgets which set the total number of virtual states $\nl$. Then we assign virtual states to nodes proportional to the following quantities:
(i) $\1_\nnl$ (uniformly), (ii) stationary distribution $\pinl$, (iii) node degree, (iv) node betweenness centrality, (v) node eigenvector centrality. For these and subsequent experiments, we use Webster's method of apportionment to assign discrete numbers of virtual states proportional to continuous quantities~\cite{MLB-HPY:80}. For each lifting budget and lifting method listed above, we minimize the lifted Kemeny constant of an MC $\Ql$ that is a lifting of $\Qnl$ via the approach described in Section~\ref{sec:lift_opt}. For both the optimized non-lifted and lifted chains, we take the highest-performing solution from 10 random initializations. Fig.~\ref{fig:lifting_approaches} shows boxplots of the distribution of percent improvement in lifted Kemeny constant $\liftK(\Ql)$ over the Kemeny constant $\kemeny(\Qnl)$ for each method and budget. It can be seen that there is a trend of diminishing returns in performance from increasing the budget. We will proceed with the stationary distribution method in subsequent experiments.

\begin{figure}
\centering
\includegraphics[width=\linewidth]{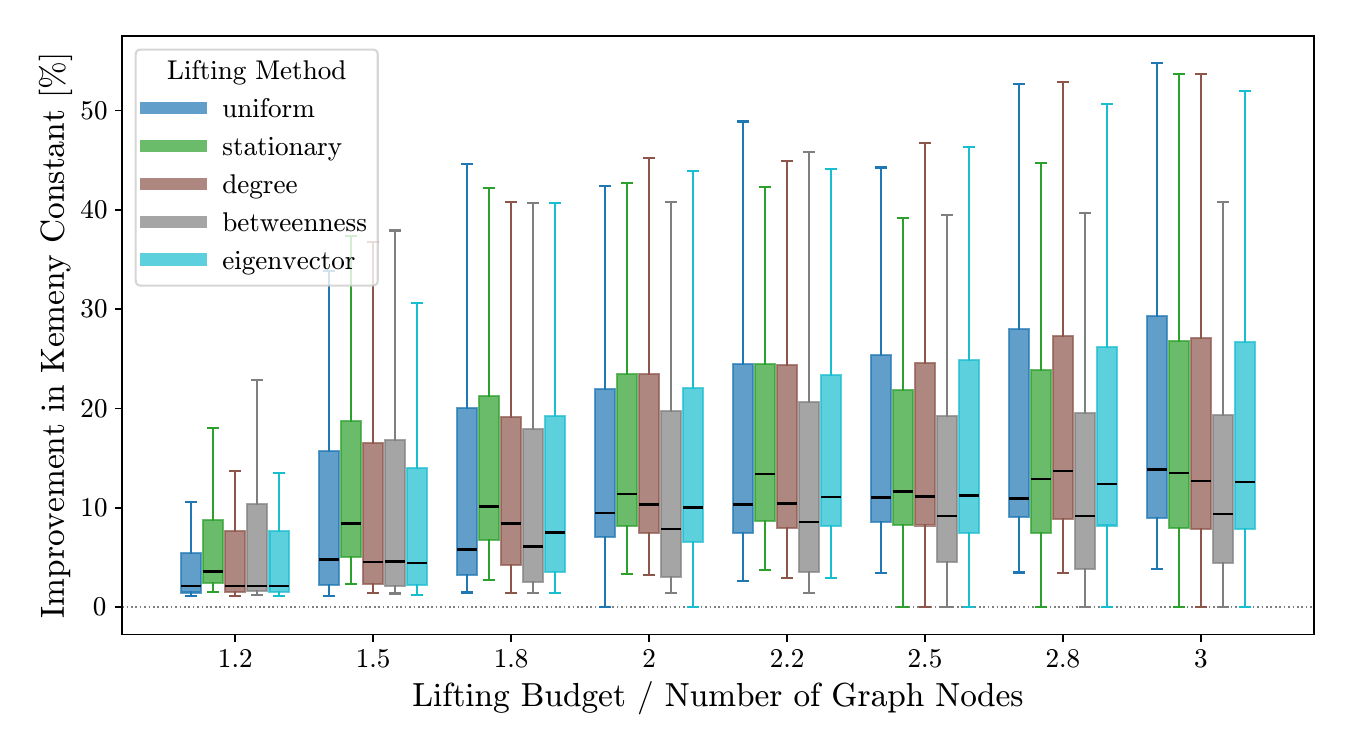}
\caption{Results from the optimization of lifted MC $\Ql$ and non-lifted MC $\Qnl$ patrol strategies on 10-node Erdős--Rényi random undirected graphs conditioned on connectivity. The boxplots show the percent improvement in the optimized $\liftK(\Ql)$ compared to the optimized $\kemeny(\Qnl)$. Each color represents a different lifting method (see legend), and each box represents the distribution of results from 50 Monte Carlo runs.}
\label{fig:lifting_approaches}
\end{figure}

\subsection{General Graphs and Stationary Distributions}\label{sec:p_sweep}
In this section, we evaluate numerically the scope for performance improvement via lifting for the three metrics. We use the same randomly generated graph and stationary distribution approach as in the previous section, but here we generate 50 graphs for each value of the ER edge inclusion probability $p$ shown on the vertical axes of Fig.~\ref{fig:ER_ridgeline}. The left ridgeline plot in Fig.~\ref{fig:ER_ridgeline} shows the distribution of percent decreases of $\liftK(\Ql)$, with a lifting budget of $3\nnl = 30$ and the stationary distribution lifting method, compared to $\kemeny(\Qnl)$ for each value of $p$. It can be seen that substantial performance improvement is possible via lifting. There is also a trend of increasing improvements with graph sparsity, which aligns with the results from the corridor environment.
We perform similar experiments with the (lifted) SGCP metric and the (lifted) RTE metric. The main differences with the SGCP metrics are that there is no stationary distribution constraint on the non-lifted MC $\Qnl$, the metric itself penalizes reducibility so the constraint parameter $\epsilon = 0$, and there is an additional attack duration parameter $\tau \in \natural_{>0}^{\nnl}$. For each random graph, we set $\tau = d\cdot\1_\nnl$ where $d$ is the diameter of the graph, i.e., the shortest attack duration with a nonzero capture probability. See~\cite{XD-DP-FB:19b, YJ-GDG-XD-JRM-FB:23q} for more details.
For the RTE metric, the main difference is the addition of the truncation accuracy parameter $\eta \in (0,1)$ which we set to 0.1 for our experiments. See~\cite{XD-MG-FB:17o} for more details.
Additionally, the objective is to maximize both of these metrics, and we rely on automatic differentiation using JAX instead of closed forms for the gradient. 
The SGCP results are shown in the middle plot of Fig.~\ref{fig:ER_ridgeline} where meaningful increases in the performance of the lifted MC can be seen. There is a similar trend of performance improvement scaling with graph sparsity. Note that for smooth optimization we approximate the $\min$ in the (lifted) SGCP metric with the LogSumExp function and anneal its temperature from $10^{-3}$ to $10^{-6}$. However, we evaluate the optimized MCs via the true (lifted) SGCP metric.
The RTE results are shown in the right plot of Fig.~\ref{fig:ER_ridgeline} where increases in the performance of the lifted MC can be seen. Note that we report the percent increase in Return-Time Perplexity (RTP) $\mathcal{P} = \exp(\mathcal{H})$ rather than return-time entropy $\mathcal{H}$ because entropy is logarithmically scaled. Table~\ref{tab:comp_times} shows the average computation times for the optimization procedure of the Non-Lifted (NL) strategy $\Qnl$ and Lifted (L) strategy $\Ql$ with each metric. As expected, the lifted optimization is generally slower due to the increased number of parameters. However, we restrict attention to small graphs that can be covered effectively by a single patroller; therefore, the optimization of the lifted strategy is feasible. 

\begin{figure*}
    \centering
    \includegraphics[width=\linewidth]{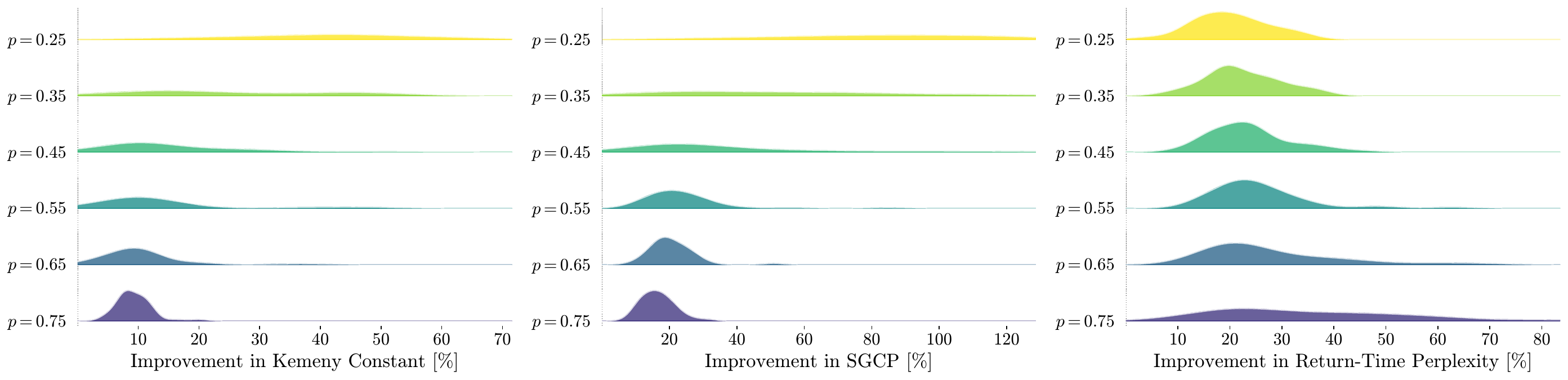}
    \caption{Results from the optimization of lifted and non-lifted MC patrol strategies on 10-node Erdős--Rényi random undirected graphs conditioned on connectivity. The vertical axes list the ER edge inclusion probability $p$. The ridgeline plots show the percent improvement in lifted metric values achieved by the optimized lifted MCs $\Ql$ compared to the metric values achieved by the optimized non-lifted MCs $\Qnl$. Each ridgeline plot represents the distribution of results from 50 Monte Carlo runs.}
    \label{fig:ER_ridgeline}
\end{figure*}

\begin{table}[htbp]
    \centering
    \caption{ER Graph MC Optimization Mean Wall Time Data}
    \label{tab:comp_times}
    \begin{tabular}{| l | c | c | c | c | c | c |} 
        \hline

        & \multicolumn{2}{c|}{Kemeny [s]} & \multicolumn{2}{c|}{SGCP [s]} & \multicolumn{2}{c|}{RTP [s]} \\
        
        \cline{2-7}
        
        & NL & L & NL & L & NL & L \\
        \hline
        
        $p = 0.25$ & $9.1$ & $9.1$ & $6.0$ & $44.1$ & $23.9$ & $47.1$ \\
        $p = 0.35$ & $9.3$ & $9.4$ & $7.1$ & $48.1$ & $25.8$ & $46.5$ \\
        $p = 0.45$ & $5.5$ & $9.8$ & $7.2$ & $49.3$ & $32.1$ & $50.7$ \\
        $p = 0.55$ & $4.0$ & $9.3$ & $4.7$ & $33.9$ & $34.5$ & $55.7$ \\
        $p = 0.65$ & $2.8$ & $9.6$ & $4.1$ & $30.5$ & $27.1$ & $44.2$ \\
        $p = 0.75$ & $2.9$ & $11.9$ & $4.3$ & $30.4$ & $21.5$ & $36.6$ \\
        \hline
    \end{tabular}
\end{table}

\subsection{San Francisco Case Study}
In this section we apply the proposed lifted MC optimization to a real-world example shown in Fig.~\ref{fig:SF_crime}, a 12-node weighted graph corresponding to police patrol locations in San Francisco. See~\cite{SA-EF-SLS:14} for details. We set the stationary distribution proportional to the historic crime rates at each location. Note that Theorem~\ref{thm:bounds} and Corollary~\ref{cor:lmi} were shown for unweighted graphs so the results shown here are purely empirical. 
The results for all three metrics are shown in Table~\ref{tab:SF}. As before, we first optimize a non-lifted MC and then optimize a lifting of that MC. Here we use stationary lifting with a budget of $3\nnl = 36$ virtual states. For the RTE metric we set $\eta = 0.1$. Each row of the table corresponds to a different version of the graph. The original graph is complete with the longest edges having a weight of $9$ minutes. This is the last row of the table. For each previous row we prune edges with weight larger than the given value of $\wnl_\textup{max}$. This is a natural way to decrease the density of the graph because time spent by the patroller traversing edges is not beneficial for the Kemeny or SGCP metrics. For the SGCP metric with all versions of the graph we use the attack duration $\tau = 13\cdot\1_\nnl$, the diameter of the sparsest graph, so that the results can be compared between rows of the table. We can see that the SGCP metric results show a consistent trend of greater improvements with graph sparsity. However, unlike the previous ER graph results, here the Kemeny constant results do not show a clear trend. We believe that this is because the optimized MCs avoid long edges so their removal from the graph has little effect. This is supported by the small increase in Kemeny constant as $\wnl_\textup{max}$ decreases. The takeaway is that this methodology can be applied to real-world graphs, but one should expect larger benefits for sparse topologies such as indoor environments or transportation systems. 
Finally, if a naive adversary records a trajectory of the optimized lifted MC and uses a plug-in estimator to estimate a non-lifted MC, then their estimate would converge to the collapsed MC. This would lead the adversary to expect the performance in the `NL' columns but experience the performance in the `L' columns. This is easily interpretable for the Kemeny and SGCP metrics: they would overestimate the mean time to a stationary-random node from a stationary-random node by $1$ minute, and they would underestimate the SGCP by about $1$ percentage point (see last row of table).

\begin{figure}
\centering
\includegraphics[width=\linewidth]{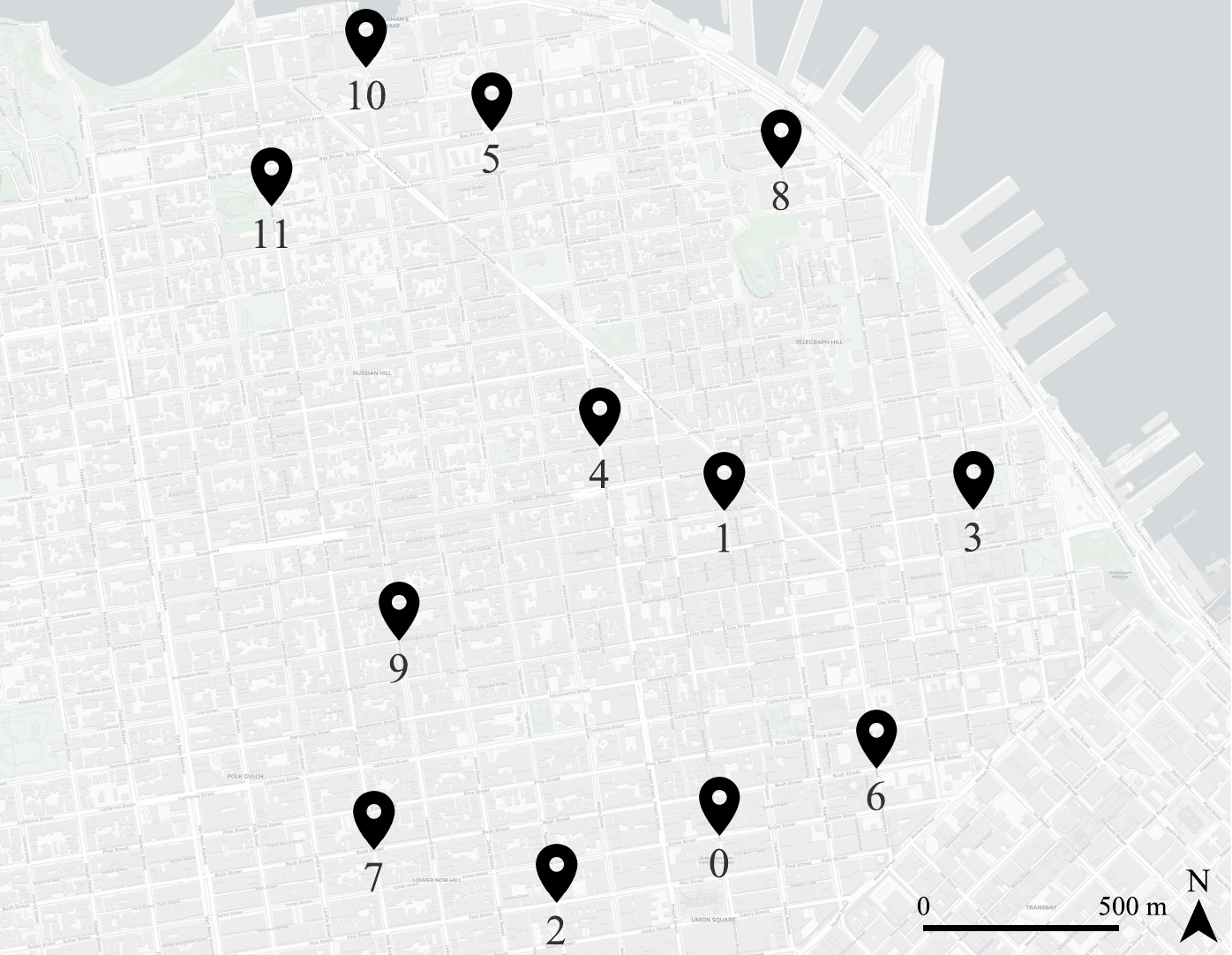}
\caption{A real-world environment consisting of twelve police patrol locations in downtown San Francisco. This benchmark is taken from~\cite{SA-EF-SLS:14}.}
\label{fig:SF_crime}
\end{figure}

\begin{table}[htbp]
    \centering
    \caption{San Francisco Case Study Results}
    \label{tab:SF}
    \setlength{\tabcolsep}{3pt} 
    \begin{tabular}{| c | c | c | c | c | c | c | c | c | c |}
        \hline
        & \multicolumn{3}{c|}{Kemeny [min]} & \multicolumn{3}{c|}{SGCP [\%]} & \multicolumn{3}{c|}{RTP} \\
        \cline{2-10}
        $w_{\max}$ & NL & L & \(\Delta\) & NL & L & \(\Delta\) & NL & L & \(\Delta\) \\
        \hline
        4 & 22.6 & 22.1 & 2.2\% & 7.3  & 9.4  & 28\%  & 57  & 62  & 8.8\% \\
        5 & 22.3 & 21.9 & 1.8\% & 11.1 & 13.1 & 18\%  & 94  & 102 & 8.5\% \\
        6 & 22.3 & 21.6 & 3.1\% & 13.4 & 15.0 & 12\%  & 120 & 130 & 8.3\% \\
        7 & 22.3 & 21.3 & 4.5\% & 14.7 & 15.9 & 8.2\% & 132 & 147 & 11\% \\
        8 & 22.3 & 21.3 & 4.5\% & 15.5 & 16.7 & 7.8\% & 138 & 154 & 12\% \\
        9 & 22.3 & 21.3 & 4.5\% & 17.3 & 18.1 & 4.6\% & 141 & 157 & 11\% \\
        \hline
    \end{tabular}
\end{table}

\subsection{University of Coimbra Case Study}
In this section we apply the proposed lifted MC optimization to another real-world example, a sparse 18-node weighted graph shown in Fig.~\ref{fig:CTCV} corresponding to indoor patrol locations in the CTCV building at the University of Coimbra. This graph is taken from a collection of graph patrolling benchmarks; see~\cite{DP-LI-AF:19} for details. We choose this graph because it is suitable for a single patroller, and we enforce a uniform stationary distribution. We use stationary lifting with a budget of $3\nnl = 54$ virtual states. We assume a patroller speed of \qty{1}{m/s} to compute travel times. For the SGCP metric, we choose the attack duration $\tau = 159\cdot \1_\nnl$ seconds which is three times the diameter to ensure a meaningful capture probability. For the RTE metric we set $\eta = 0.1$. The results are given in Table~\ref{tab:CTCV}. A significant performance improvement can be seen, particularly for the Kemeny and SGCP metrics. This aligns with previous results which indicate that sparse graphs afford the greatest improvements via lifting. As before, the naive adversary would expect the performance in the `NL' columns but experience the performance in the `L' columns. 

\begin{figure*}
\centering
\includegraphics[width=\linewidth]{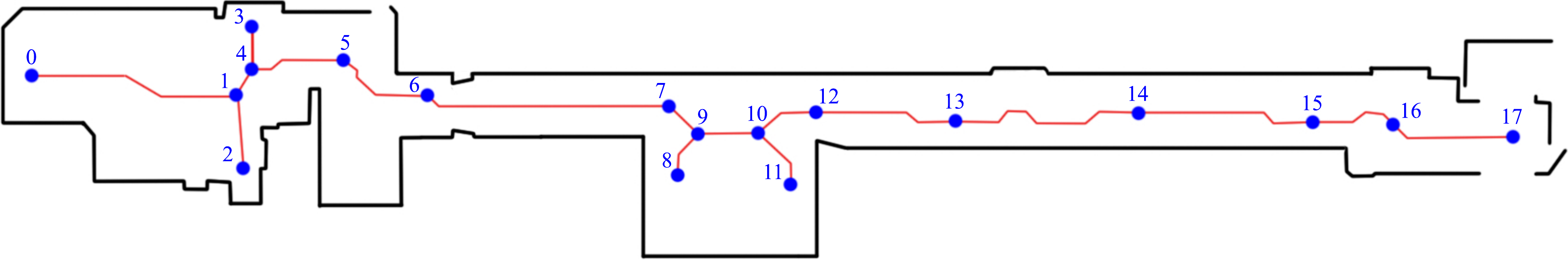}
\caption{A real-world environment of eighteen patrol locations in the CTCV building at the University of Coimbra. This benchmark is taken from~\cite{DP-LI-AF:19}.}
\label{fig:CTCV}
\end{figure*}

\begin{table}[htbp]
    \centering
    \caption{University of Coimbra Case Study Results}
    \label{tab:CTCV}
    \begin{tabular}{| c | c | c | c | c | c | c | c | c |} 
        \hline

        \multicolumn{3}{|c|}{Kemeny [s]} & \multicolumn{3}{c|}{SGCP [\%]} & \multicolumn{3}{c|}{RTP} \\
        
        \cline{1-9}
        
        NL & L & \(\Delta\) & NL & L & \(\Delta\) & NL & L & \(\Delta\) \\
        \hline
        
        315 & 72 & 77\% & 6 & 26 & 333\% & 9 & 10 & 11\% \\
        \hline
    \end{tabular}
\end{table}

\section{Conclusion}
In this paper we propose the use of classical MC lifting for improving the performance of stochastic patrol strategies. We design generalizations of the Kemeny constant, SGCP, and RTE metrics from the literature to appropriately measure the performance of lifted MCs. We prove theoretical bounds on the performance of the lifted MCs and demonstrate the novel deceptive quality of lifted MCs against model-based adversaries. Finally, we show numerically that substantial performance improvement is possible on sparse graphs. This work suggests several topics for future study, including extensions of the theoretical results to accommodate travel times, the optimal assignment of virtual states to nodes, and approaches for the adversary against lifted MCs.

\appendix

\subsection{Lumpability}\label{app:lumpability}
Given an MC $\Pl$ and partition $V$ we can always use Eq.~\eqref{eq:coll} to find the collapsed MC $\Pnl$. However, depending on $\Pl,V$, information may be lost in this process, and $\Pnl$ may not exactly reproduce the behavior of $\Pl$ on the graph. \emph{Lumpability} is the property of $\Pl,V$ that determines whether or not information loss occurs. Consider the state space $\Sl = \{1,\dots,\nl\}$ and partition $\Snl = \{S_1,\dots,S_\nnl\}$. We encode the partition $\Snl$ using an indicator matrix $V$. Given an MC $\Pl$ with values $\Xl_k \in \Sl$ and initial distribution $\pl_0$, we can characterize the stochastic process with values $\Xnl_k \in \Snl$ as $\Pr[\Xnl_0 = S_i] = \Pr_{\pl_0}[\Xl_0 \in S_i]$ with transition probabilities 
\[
    \Pr[\Xnl_1 = S_j ~|~ \Xnl_0 = S_i] = \Pr_{\pl_0}[\Xl_1 \in S_j ~|~ \Xl_0 \in S_i].
\]
Generalizing to the $k$-th step, we have that
\begin{equation*}
\begin{aligned}
    & \Pr \left[\Xnl_k = S_t ~|~ \Xnl_{k-1} = S_s, \dots,\Xnl_1 = S_j, \Xnl_0 = S_i \right] \\
    & = \Pr \left[ \Xl_k \in S_t ~|~ \Xl_{k-1} \in S_s, \dots, \Xl_1 \in S_j, \Xl_0 \in S_i \right].
\end{aligned}
\end{equation*}
We call $\Xnl_k$ the lumped chain of $\Xl_k$ with respect to $\Snl$. The above conditions ensure that the lumped chain exactly captures the behavior of $\Xl_k$ on $\Snl$. However, the lumped chain may not be an MC, motivating the following~\cite[Def.~6.3.1]{JGK-JLS:76}:
\begin{defn}[Lumpability]~\label{def:lump}
    An MC on state space $\Sl$ is \emph{lumpable} with respect to a partition $\Snl$ if, for every starting distribution $\pl_0$, the lumped chain is an MC and its transition probabilities do not depend on $\pl_0$.
\end{defn}
Equivalently, we can state that an MC $\Pl$ is \textit{lumpable with respect to} $V$ if and only if it satisfies\footnote{The choice of diagonal positive-definite weighting matrix in $U$ is arbitrary~\cite{JGK-JLS:76}. We choose $\Pil$ for consistency with the definition of lifting.} $VU \Pl V = \Pl V$~\cite[Thm.~6.3.4]{JGK-JLS:76}.
Comparing with Eq.~\eqref{eq:coll} yields
\begin{equation}\label{eq:lump}
    V \Pnl = \Pl V.
\end{equation} 
Now we show that lumpability implies that $\Pl$ is a lifting of $\Pnl$. Pre-multiplying Eq.~\eqref{eq:lump} by $U$ and recognizing that $UV = I_\nnl$ yields the second condition of Eq.~\eqref{eq:coll}. To show the first condition, check that $V^\top \pil$ is the stationary distribution of $\Pnl$:
\[
\begin{aligned}
    \left( V^\top \pil \right)^\top \Pnl & = \left(\pil\right)^\top V \Pnl = \left(\pil\right)^\top \Pl V \\
    & = \left(\pil\right)^\top V = \left( V^\top \pil \right)^\top.
\end{aligned}
\]

We have an example of lumpability in Fig.~\ref{fig:motivating_example}. It can be verified that for $\pl = 1/2$ the lifted MC $\Pl$ is lumpable with respect to $V$, and $\Pnl$ is the lumped MC. 

\subsection{Proof of Lemma~\ref{lem:lump}}\label{app:lem_3}
\begin{proof}
    From the condition for lumpability $V\Pnl = \Pl V$, we have that $\sum_{s\in S_j} \pl_{rs} = \pnl_{ij}$ for all $r \in S_i$. Therefore, paths in $\Sl$ correspond to paths in $\Snl$, and the irreducibility of $\Pl$ implies the irreducibility of $\Pnl$. Irreducibility in turn implies finite mean first passage times in $M,M^\textup{lift}$. 
    
    We proceed by induction to prove Eq.~\eqref{eq:VF}. The base case $k = 1$ is trivial. We have that $F_1^\textup{lift} = \Pl V, F_1 = \Pnl$. Substituting into the lumpability condition $V\Pnl = \Pl V$ yields $VF_1 = F_1^\textup{lift}$. Now assume $F_k^\textup{lift} = VF_k$. We want to show that $F_{k+1}^\textup{lift} = VF_{k+1}$. From Eq.~\eqref{eq:F_k}, we have that $F_{k+1} = \Pnl[F_k - \textup{diag}(F_k)]$. Pre-multiplying by V and using lumpability yields
    \[
        VF_{k+1} = V\Pnl [F_k - \textup{diag}(F_k)] = \Pl [VF_k - V\textup{diag}(F_k)].
    \]
    From Eq.~\eqref{eq:F_k_lift} and the inductive hypothesis, we have that
    \[
        F_{k+1}^\textup{lift} = \Pl[F_k^\textup{lift} - (F_k^\textup{lift} \circ V)] = \Pl [VF_k - ((VF_k) \circ V)].
    \]
    It can be shown due to the structure of $V$ that $V\textup{diag}(F_k) = (VF_k) \circ V$ completing the proof of Eq.~\eqref{eq:VF}. Then Eq.~\eqref{eq:VM} follows using the finiteness of the mean first passage times:
    \[
        M^\textup{lift} = \sum_{k=1}^\infty kF_k^\textup{lift} = \sum_{k=1}^\infty k(VF_k) = V \sum_{k=1}^\infty kF_k = VM.
    \]
\end{proof}

\subsection{Proof of Lemma~\ref{lem:m_Q_rel}}\label{app:lem_4}
\begin{proof}
    Define the vector $m_A^{A^c} \in \real^{|A|}$ containing entries $m_i^{A^c}$ for all $i \in A$. $m_A^{A^c}$ can be shown to satisfy $m_A^{A^c} = (I_{|A|} - P_A)^{-1} \1_{|A|}$   where $P_A$ is the principal submatrix of $P$ restricted to the states in $A$. Because $P_A$ has a spectral radius less than one, $(I_{|A|} - P_A)^{-1} = \sum_{k=0}^\infty P_A^k$. Therefore, we can write
    \begin{equation}\label{eq:c_seq_sum}
    \sum_{i \in A} \pi_i m_i^{A^c} = \pi_A^\top m_A^{A^c} = \sum_{k=0}^\infty c_k,
    \end{equation}
    where $c_k = \pi_A^\top P_A^k \1_{|A|}$ and $\pi_A \in \real^{|A|}$ is the stationary distribution restricted to the states in $A$. Note that $c_0 = \pi(A)$ and
    \[
    c_0 - c_1 = \pi_A^\top (\1_{|A|} - P_A \1_{|A|}) = \sum_{i \in A} \pi_i \sum_{j \in A^c} p_{ij} = Q(A,A^c).
    \]
    We proceed by showing that the decreasing sequence $c_k$ is convex. Define the increments $d_k = c_k - c_{k+1}$; convexity of $c_k$ follows from showing that the sequence $d_k$ is decreasing:
    \[
    \begin{aligned}
    d_k & = \pi_A^\top P_A^k (I_{|A|} - P_A) \1_{|A|} {=} \left( \pi_A^\top P_A \right) P_A^{k-1}(I_{|A|} - P_A) \1_{|A|} \\
    & \leq \pi_A^\top P_A^{k-1} (I_{|A|} - P_A) \1_{|A|} = d_{k-1},
    \end{aligned}
    \]
    where the inequality arises because $\pi_A^\top P_A \leq \pi_A^\top$. 
    The infinite sum in Eq.~\eqref{eq:c_seq_sum} can be viewed as a left Riemann sum. Now consider extending the secant line from $c_0$ to $c_1$ down to the abscissa. This defines a triangle between the axes and the secant line. Due to convexity and the fact that $c_k \geq 0$ for all $k$, the area of this triangle lower bounds the left Riemann sum. This triangle has side lengths $\pi(A)$ and $\pi(A)/Q(A,A^c)$. Thus we have the bound given in Eq.~\eqref{eq:m_Q_rel}.
\end{proof}

\subsection{Proof of Proposition~\ref{prop:deception}}\label{app:deception}
\begin{proof}
    By Definition~\ref{def:plugin}, the estimated transition probabilities between nodes $i,j \in \Snl$ satisfy
    \[
    \begin{aligned}
    \lim_{T\to\infty} \hat{p}_{ij}(\xnl_T) & =  \lim_{T\to\infty} \frac{n_{ij}(\xnl_T)}{n_i(\xnl_T)} = \lim_{T\to\infty} \frac{n_{S_i S_j}(\xl_T)}{n_{S_i}(\xl_T)} \\ 
    & = \lim_{T\to\infty} \frac{\sum_{k\in S_i}\sum_{l\in S_j} n_{kl}(\xl_T)}{\sum_{k\in S_i} n_k(\xl_T)},
    \end{aligned}
    \]
    where $n_{kl}(\xl_T)$ are the empirical transition counts between virtual states $k,l$ and $n_k(\xl_T)$ is the empirical visit count to virtual state $k$. Dividing the numerator and denominator by $T$ allows the application of Eq.~\eqref{eq:stat_dist}. The denominator becomes
    \[
    \lim_{T\to\infty} \frac{\sum_{k\in S_i} n_k(\xl_T)}{T} = \sum_{k\in S_i} \pil_k = \pil(S_i), \quad \textup{a.s.}
    \]
    Now for the numerator consider an MC $\Plp$ on state space $\Slp = \{(k,l) \in \Sl \times\Sl\ |\ \pl_{kl} >0 \}$ whose transition probabilities are $\plp_{(k,l)(l',r)} = \pl_{lr}$ if $l = l'$ and zero otherwise. It can be verified that $\Plp$ is also irreducible and has stationary distribution $\pilp_{kl} = \pil_k \pl_{kl}$. Applying Eq.~\eqref{eq:stat_dist} via $\Plp$ yields
    \[
    \begin{aligned}
    \lim_{T\to\infty} \frac{\sum_{k\in S_i}\sum_{l\in S_j} n_{kl}(\xl_T)}{T} & = \sum_{k\in S_i} \sum_{l\in S_j} \pilp_{kl} \\ 
    & = \sum_{k\in S_i} \pil_k \sum_{l\in S_j} \pl_{kl}, \quad \textup{a.s.}
    \end{aligned}
    \]
    Then the ratio is identical to the transition probability condition for lifting given in Eq.~\eqref{eq:coll_elementwise} which in matrix form yields Eq.~\eqref{eq:deception} where $\Pnl = U \Pl V$.
\end{proof}

\subsection{Proof of Lemma~\ref{lem:mean_return_time}}\label{app:return_time}
\begin{proof}
    If we can show that $\sum_{l\in S_j} \pil_l m_l^{S_j} = \pinl^\top (\Pnl \circ \Wnl) \1_\nnl$, then the rest of the argument is the same as in Section~\ref{sec:lift_RTE}.
    Rearranging Eq.~\eqref{eq:liftM} and pre-multiplying by the transpose of $\pil$ yields 
    \[
        \left(\pil\right)^\top D_j m^{S_j} = \left(\pil\right)^\top (\Pl\circ\Wl) \1_{\nl}. 
    \]
    Then due to the structure of $D_j$, the LHS is equivalent to $\sum_{l\in S_j} \pil_l m_l^{S_j}$. The RHS can be rewritten
    \[
        \sum_{k,l=1}^{\nl} \ql_{kl} \wl_{kl} = \sum_{i,j=1}^\nnl \wnl_{ij} \sum_{k\in S_i} \sum_{l\in S_j} \ql_{kl}  = \sum_{i,j=1}^\nnl \qnl_{ij} \wnl_{ij},
    \]
    where we have used $\Wl = V \Wnl V^\top$ and $\Qnl = V^\top \Ql V$. The final expression is equivalent to $\pinl^\top (\Pnl\circ\Wnl) \1_\nnl$.
\end{proof}

\subsection{Proof of Proposition~\ref{prop:grad_lift_K}}\label{app:gradient}
\begin{proof}
    First, we will aim to introduce $\Ql$ in Eq.~\eqref{eq:liftK} and Eq.~\eqref{eq:liftM} and eliminate $\Pl,\pil$. We can rewrite Eq.~\eqref{eq:liftK} as
    \[
    \liftK = \sum_{j=1}^\nnl \pinl_j \left[ \left(\pil\right)^\top m^{S_j} \right] = \sum_{j=1}^\nnl \pinl_j \left[ \1_{\nl}^\top \left(\Ql\right)^\top m^{S_j} \right].
    \]
    Pre-multiplying Eq.~\eqref{eq:liftM} by $\Pil = \textup{diag}(\Ql\1_{\nl})$ and rearranging yields
    \[
    \left[\textup{diag}(\Ql\1_{\nl}) {-} \Ql(I_{\nl}-D_j)\right] m^{S_j} - (\Ql \circ \Wl)\1_{\nl} = 0
    \]
    whose LHS we will denote $C_j(\Ql,m^{S_j})$. We proceed by the adjoint state method~\cite{JC:86}. Define the Lagrangian 
    \[
    \mathscr{L} = \sum_{j=1}^\nnl \left[ \pinl_j \1_{\nl}^\top \left(\Ql\right)^\top m^{S_j} + \lambda_j^\top C_j(\Ql,m^{S_j}) \right].
    \]
    Differentiating with respect to $m^{S_j}$ and setting equal to zero we arrive at an equation for the adjoint state variables
    \[
    \lambda_j = -\pinl_j \left[ \textup{diag}(\Ql\1_{\nl}) {-} (I_{\nl} - D_j) \left(\Ql\right)^\top \right]^{-1} \Ql \1_{\nl},
    \]
    which can be simplified to Eq.~\eqref{eq:lambda}. Finally, differentiating $\mathscr{L}$ with respect to $\Ql$ yields Eq.~\eqref{eq:grad_liftK}.
\end{proof}

\bibliographystyle{IEEEtran}
\bibliography{bib/alias,bib/FB,bib/Main,bib/New,bib/YJ_TAC}

\begin{thebibliography}{10}
\providecommand{\url}[1]{#1}
\csname url@samestyle\endcsname
\providecommand{\newblock}{\relax}
\providecommand{\bibinfo}[2]{#2}
\providecommand{\BIBentrySTDinterwordspacing}{\spaceskip=0pt\relax}
\providecommand{\BIBentryALTinterwordstretchfactor}{4}
\providecommand{\BIBentryALTinterwordspacing}{\spaceskip=\fontdimen2\font plus
\BIBentryALTinterwordstretchfactor\fontdimen3\font minus \fontdimen4\font\relax}
\providecommand{\BIBforeignlanguage}[2]{{%
\expandafter\ifx\csname l@#1\endcsname\relax
\typeout{** WARNING: IEEEtran.bst: No hyphenation pattern has been}%
\typeout{** loaded for the language `#1'. Using the pattern for}%
\typeout{** the default language instead.}%
\else
\language=\csname l@#1\endcsname
\fi
#2}}
\providecommand{\BIBdecl}{\relax}
\BIBdecl

\bibitem{NB:22}
N.~Basilico, ``Recent trends in robotic patrolling,'' \emph{Current Robotics Reports}, vol.~3, no.~2, pp. 65--76, 2022.

\bibitem{XD-FB:20a}
X.~Duan and F.~Bullo, ``Markov chain-based stochastic strategies for robotic surveillance,'' \emph{Annual Review of Control, Robotics, and Autonomous Systems}, vol.~4, pp. 243--264, 2021.

\bibitem{RP-PA-FB:14b}
R.~Patel, P.~Agharkar, and F.~Bullo, ``Robotic surveillance and {M}arkov chains with minimal weighted {K}emeny constant,'' \emph{IEEE Transactions on Automatic Control}, vol.~60, no.~12, pp. 3156--3167, 2015.

\bibitem{PD-SH-RMN:00}
P.~Diaconis, S.~Holmes, and R.~M. Neal, ``Analysis of a nonreversible {M}arkov chain sampler,'' \emph{Annals of Applied Probability}, vol.~10, no.~3, pp. 726--752, 2000.

\bibitem{XD-WW-RY:26}
X.~Duan, W.~Wang, and R.~Yan, ``Maximizing {M}arkov trajectory entropy under {K}emeny constraints for robotic surveillance,'' \emph{IEEE Transactions on Automatic Control}, vol.~71, no.~7, pp. 4850--4857, 2026.

\bibitem{BA-DSB:12}
B.~Acikmese and D.~S. Bayard, ``A {M}arkov chain approach to probabilistic swarm guidance,'' in \emph{American Control Conference}, 2012, p. 6300–6307.

\bibitem{GDG-FB-JRM:22v}
G.~D\'iaz-Garc\'ia, F.~Bullo, and J.~R. Marden, ``Distributed {Markov} chain-based strategies for multi-agent robotic surveillance,'' \emph{IEEE Control Systems Letters}, vol.~7, pp. 2527--2532, 2023.

\bibitem{CPCF-AZ-BFH:26}
C.~P.~C. Franssen, A.~Zocca, and B.~F. Heidergott, ``A first-order gradient approach for the connectivity optimization of {M}arkov chains,'' \emph{IEEE Transactions on Automatic Control}, vol.~71, no.~2, pp. 839--854, 2026.

\bibitem{ABA-SS-SLS:25}
A.~B. Asghar, S.~Sundaram, and S.~L. Smith, ``Multirobot persistent monitoring: Minimizing latency and number of robots with recharging constraints,'' \emph{IEEE Transactions on Robotics}, vol.~41, pp. 236--252, 2025.

\bibitem{XL-CGC:15}
X.~Lin and C.~G. Cassandras, ``An optimal control approach to the multi-agent persistent monitoring problem in two-dimensional spaces,'' \emph{IEEE Transactions on Automatic Control}, vol.~60, no.~6, p. 1659–1664, 2015.

\bibitem{DP-RPR:16}
D.~Portugal and R.~P. Rocha, ``Cooperative multi-robot patrol with {Bayesian} learning,'' \emph{Autonomous Robots}, vol.~40, no.~5, pp. 929--953, 2016.

\bibitem{DP-RPR:10}
------, ``{MSP} algorithm: Multi-robot patrolling based on territory allocation using balanced graph partitioning,'' in \emph{ACM Symposium on Applied Computing}, 2010, pp. 1271--1276.

\bibitem{SA-AS-FT:21}
S.~Apers, A.~Sarlette, and F.~Ticozzi, ``Characterizing limits and opportunities in speeding up {M}arkov chain mixing,'' \emph{Stochastic Processes and their Applications}, vol. 136, p. 145–191, 2021.

\bibitem{AN-MTG-MW-BR-MR-TPM:24}
A.~Nag{\'o}rko, M.~Waniek, M.~R{\'o}g, M.~T. Godziszewski, B.~Rosiak, and T.~P. Michalak, ``General {M}arkov model for solving patrolling games,'' in \emph{Conf.\ on Uncertainty in Artificial Intelligence}, 2024, pp. 2646--2669.

\bibitem{PM-SS:17}
P.~Masters and S.~Sardina, ``Deceptive path-planning,'' in \emph{Int.\ Joint Conf.\ on Artificial Intelligence}, 2017, pp. 4368--4375.

\bibitem{NT-NA:17}
N.~Talmor and N.~Agmon, ``On the power and limitations of deception in multi-robot adversarial patrolling,'' in \emph{Int.\ Joint Conf.\ on Artificial Intelligence}, 2017, pp. 430--436.

\bibitem{MOK-MO-UT:23}
M.~O. Karabag, M.~Ornik, and U.~Topcu, ``Exploiting partial observability for optimal deception,'' \emph{IEEE Transactions on Automatic Control}, vol.~68, no.~7, pp. 4443--4450, 2023.

\bibitem{AS-CNH:14}
A.~Saboori and C.~N. Hadjicostis, ``Current-state opacity formulations in probabilistic finite automata,'' \emph{IEEE Transactions on Automatic Control}, vol.~59, no.~1, pp. 120--133, 2014.

\bibitem{DA-JAF:02}
D.~Aldous and J.~A. Fill, \emph{Reversible Markov Chains and Random Walks on Graphs}, 2002, unfinished monograph, recompiled 2014.

\bibitem{EC:13}
E.~\c{C}inlar, \emph{Introduction to Stochastic Processes}.\hskip 1em plus 0.5em minus 0.4em\relax Dover Publications, 2013.

\bibitem{JGK-JLS:76}
J.~G. Kemeny and J.~L. Snell, \emph{Finite Markov Chains}.\hskip 1em plus 0.5em minus 0.4em\relax Springer, 1976.

\bibitem{XD-DP-FB:19b}
X.~Duan, D.~Paccagnan, and F.~Bullo, ``Stochastic strategies for robotic surveillance as {Stackelberg} games,'' \emph{IEEE Transactions on Control of Network Systems}, vol.~8, no.~2, pp. 769--780, 2021.

\bibitem{YJ-GDG-XD-JRM-FB:23q}
Y.~John, G.~D\'iaz-Garc\'ia, X.~Duan, J.~R. Marden, and F.~Bullo, ``A stochastic surveillance {Stackelberg} game: Co-optimizing defense placement and patrol strategy,'' \emph{IEEE Transactions on Automatic Control}, vol.~70, no.~8, pp. 5468--5474, 2025.

\bibitem{XD-MG-FB:17o}
X.~Duan, M.~George, and F.~Bullo, ``Markov chains with maximum return time entropy for robotic surveillance,'' \emph{IEEE Transactions on Automatic Control}, vol.~65, no.~1, pp. 72--86, 2020.

\bibitem{FC-LL-IP:99}
F.~Chen, L.~Lov{\'a}sz, and I.~Pak, ``Lifting {M}arkov chains to speed up mixing,'' in \emph{Annual ACM Symposium on Theory of Computing}, May 1999, pp. 275--281.

\bibitem{DaL-YP:17}
D.~A. Levin and Y.~Peres, \emph{Markov Chains and Mixing Times}, 2nd~ed.\hskip 1em plus 0.5em minus 0.4em\relax American Mathematical Society, 2017.

\bibitem{LL:78b}
L.~Ljung, ``Convergence analysis of parametric identification methods,'' \emph{IEEE Transactions on Automatic Control}, vol.~23, no.~5, p. 770–783, 1978.

\bibitem{HW:82}
H.~White, ``Maximum likelihood estimation of misspecified models,'' \emph{Econometrica}, vol.~50, no.~1, pp. 1--25, 1982.

\bibitem{KD-PGM-SPM:11}
K.~Deng, P.~G. Mehta, and S.~P. Meyn, ``Optimal {K}ullback-{L}eibler aggregation via spectral theory of {M}arkov chains,'' \emph{IEEE Transactions on Automatic Control}, vol.~56, no.~12, p. 2793–2808, 2011.

\bibitem{BCG-TP-GK-HK:15}
B.~C. Geiger, T.~Petrov, G.~Kubin, and H.~Koeppl, ``Optimal {K}ullback–{L}eibler aggregation via information bottleneck,'' \emph{IEEE Transactions on Automatic Control}, vol.~60, no.~4, pp. 1010--1022, 2015.

\bibitem{YO:21}
Y.~Ohta, ``On the realization of hidden {M}arkov models and tensor decomposition,'' \emph{IFAC-PapersOnLine}, vol.~54, no.~9, pp. 725--730, 2021.

\bibitem{KJ-DS-JS:07}
K.~Jung, D.~Shah, and J.~Shin, ``\BIBforeignlanguage{English}{Fast and slim lifted {M}arkov chains},'' in \emph{\BIBforeignlanguage{English}{Allerton Conf.\ on Communication, Control and Computing}}, 2007, pp. 1239--1246.

\bibitem{MLB-HPY:80}
M.~L. Balinski and H.~P. Young, ``The {W}ebster method of apportionment,'' \emph{Proceedings of the National Academy of Sciences}, vol.~77, no.~1, pp. 1--4, 1980.

\bibitem{SA-EF-SLS:14}
S.~Alamdari, E.~Fata, and S.~L. Smith, ``Persistent monitoring in discrete environments: Minimizing the maximum weighted latency between observations,'' \emph{International Journal of Robotics Research}, vol.~33, no.~1, pp. 138--154, 2014.

\bibitem{DP-LI-AF:19}
D.~Portugal, L.~Iocchi, and A.~Farinelli, \emph{A {ROS}-Based Framework for Simulation and Benchmarking of Multi-robot Patrolling Algorithms}.\hskip 1em plus 0.5em minus 0.4em\relax Springer, 2019, pp. 3--28.

\bibitem{JC:86}
J.~C{\'e}a, ``Conception optimale ou identification de formes, calcul rapide de la d{\'e}riv{\'e}e directionnelle de la fonction co{\^u}t,'' \emph{ESAIM: Mod{\'e}lisation math{\'e}matique et analyse num{\'e}rique}, vol.~20, no.~3, pp. 371--402, 1986.

\end{thebibliography}

\vfill

\end{document}